\let\oldnl\nl
\newcommand{\nonl}{\renewcommand{\nl}{\let\nl\oldnl}}
\newtheorem{SDP}[theorem]{SDP}
\author{Suprovat Ghoshal\\IISc\footnote{Indian Institute of Science, Bangalore, India.} 
\\suprovat@iisc.ac.in
\and Anand Louis\\IISc\footnotemark[1]\\anandl@iisc.ac.in
\and Rahul Raychaudhury\\IISc\footnotemark[1]\\rahulr@iisc.ac.in}
\date{}
\title{Approximation Algorithms for Partially Colorable Graphs}
\begin{document}
\begin{titlepage}
\maketitle
\begin{abstract}
Graph coloring problems are a central topic of study in the theory of algorithms.
We study the problem of partially coloring {\em partially colorable graphs}.
For $\alpha \leq 1$ and $k \in \mathbb{Z}^+$, 	
we say that a graph $G=(V,E)$ is $\alpha$-partially $k$-colorable, if there 
exists a subset $S\subset V$ of cardinality $\lvert S \rvert \geq \alpha \lvert V \rvert$ such that
the graph induced on $S$ is $k$-colorable.
Partial $k$-colorability is a more robust structural property of a graph than $k$-colorability. 
For graphs that arise in practice, partial $k$-colorability might be 
a better notion to use than $k$-colorability, since data arising in practice often contains 
various forms of noise.
	
We give a polynomial time algorithm that takes as input a $(1 - \epsilon)$-partially $3$-colorable 
graph $G$ and a constant $\gamma \in [\epsilon, 1/10]$, and colors a $(1 - \epsilon/\gamma)$ fraction of the vertices
using $\tilde{O}\left(n^{0.25 + O(\gamma^{1/2})} \right)$ colors.
We also study natural semi-random families of instances of partially $3$-colorable graphs and 
partially $2$-colorable graphs, and give stronger bi-criteria approximation guarantees for these 
family of instances.

\end{abstract}

\end{titlepage}

\newcommand{\paren}[1]{\left(#1\right)}
\newcommand{\pkc}[1]{\ensuremath{{\sf P}#1{\sf C}} }
\newcommand{\pkcr}[1]{\ensuremath{{\sf P}#1{\sf C}^{\cal R}\paren{n,p}} }
\newcommand{\pthrc}{\pkc{3} }
\newcommand{\pthrcr}{\pkcr{3} }
\newcommand{\Abs}[1]{\left\lvert#1\right\rvert }
\newcommand{\defeq}{\stackrel{\textup{def}}{=} } 

\section{Introduction}

Graph coloring problems are a central topic of study in the theory of algorithms
\cite{wigderson_1983,KMS98,AG11,kawarabayashi_thorup_2017}.
An undirected graph $G=(V,E)$ is said to be $k$-colorable if there exists an assignment of colors 
$f:V \to [k]$ such that $f(u) \neq f(v)$ for each $\set{u,v} \in E$.
For a graph $G$, the minimum value of $k$ for which it is $k$-colorable is called its chromatic number.
Computing a $3$-coloring of a $3$-colorable graph is a fundamental NP-hard problem.
Efficiently computing a coloring of a $3$-colorable graph which only uses a few colors is a major
open problem in the study of algorithms.
The current best known algorithm colors a $3$-colorable graph on $n$ vertices using $O(n^{0.199})$ colors \cite{kawarabayashi_thorup_2017}.
We study the problem of coloring partially colorable graphs. 

\begin{definition}
\label{def:pkc}
An undirected graph $G = (V,E)$ is defined to be $\alpha$-partially $k$-colorable, denoted by
$\alpha$-\pkc{k}, if there
exists a subset $V_{\rm good} \subset V$ such that $\Abs{V_{\rm good}} \geq \alpha \Abs{V}$ and the graph 
induced on $V_{\rm good}$ is $k$-colorable. We will call such a set $V_{\rm good}$ the set of {\em good} vertices,
and $V_{\rm bad} \defeq V \setminus V_{\rm good}$ the set of {\em bad} vertices. 
\end{definition}

We remark that for a given graph the partitioning of the vertex set $V$ into $V_{\rm good}$ and $V_{\rm bad}$ may not be unique. In such cases, the claims we make in this paper will hold for any such fixed partition. 

It is well known that for a fixed $k$, the problem of determining whether a given graph is $k$-colorable 
is an NP-hard problem \cite{Karp72}. Therefore, determining whether a graph belongs to $1$-\pkc{k}
is an NP-hard problem, and hence, computing the largest value of $\alpha$ for which a graph
belongs to $\alpha$-\pkc{k} is also an NP-hard problem.

Note that a graph that is $(1-\epsilon)$-partially $3$-colorable can have chromatic number 
as large as $\Abs{V_{\rm bad}} = \epsilon n$. Therefore, the notion of the chromatic number of the graph does not 
capture the structural property ($3$-colorability) satisfied by most of the graph.
Partial $k$-colorability is a more robust stuctural property than $k$-colorability.
Therefore, for graphs that arise in practice, partial $k$-colorability might be a better notion to
to use than $k$-colorability, since data arising in practice often contains various forms of noise;
the notion of bad vertices can be used to capture some types of noisy vertices in the graph.  

\paragraph{Other notions of partial $k$-coloring.}Another related notion 
of partial coloring is the following.
\begin{definition}
\label{def:pkc-edge}
An undirected graph $G = (V,E)$ is defined to be $\alpha$-partially $k$-colorable, 
if there exists a coloring of the vertices $f : V \to [k]$ such that for at least 
$\alpha \Abs{E}$ edges $\set{u,v}$, $f(u) \neq f(v)$.
\end{definition}
This definition, which asks that the coloring should ``satisfy'' at least $\alpha$ 
fraction of the edges, can be viewed as the {\em edge} version of partial $k$-colorability,
whereas Definition \ref{def:pkc} can be viewed as the {\em vertex} version of partial $k$-colorability.
For a fixed constant $k$, computing the maximum value of $\alpha$ for which the input graph satisfies 
Definition \ref{def:pkc-edge} can be formulated as a Max-$2$-{\sf CSP} with alphabet size $k$; approximation algorithms
for Max-$2$-{CSP}s have been extensively studied in the literature \cite{R08,RS09a,BRS11} etc.
Therefore, we focus our attention on Definition \ref{def:pkc}.

\subsection{Our Results}

We give an efficient (bi-criteria) approximation algorithm for coloring partially $3$-colorable graphs.
\begin{restatable}{thm}{ColMain}  \label{thm:3col-main}
There exists a polynomial time algorithm that takes as input 
a $(1-\epsilon)$-\pthrc~graph $G = (V,E)$ and any fixed choice of $\gamma \in [\epsilon, 1/100]$, 
and produces a set $S \subset V$ such that $\Abs{S} \leq(3\epsilon/\gamma) \Abs{V}$
and a coloring of $V \setminus S$ using $\tilde{O}(n^{0.25 + O(\gamma^{1/2})})$ colors\footnote{
	$\tilde{O}(\cdot)$ hides factors polylogarithmic in $n$.}.
\end{restatable}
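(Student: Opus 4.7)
The plan is to adapt the Karger-Motwani-Sudan (KMS) framework for $3$-coloring to the partial setting, preceding the SDP rounding with a preprocessing step that identifies and removes a small set $S$ of vertices that look locally inconsistent with a $3$-coloring. The three ingredients will be (i) a vector-$3$-coloring SDP with per-vertex slack variables; (ii) a thresholding procedure that places vertices of high slack into $S$; and (iii) the standard KMS rounding (Gaussian projection) combined with Wigderson's recursion for high-degree vertices, executed on $G[V \setminus S]$, whose SDP solution is only $\gamma$-perturbed from an honest vector $3$-coloring.

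For the SDP, associate a unit vector $v_u \in \mathbb{R}^n$ to each vertex and a slack $y_u \ge 0$, impose $\langle v_u, v_w \rangle \le -\tfrac{1}{2} + y_u + y_w$ for every edge $\{u,w\} \in E$, and minimize $\sum_u y_u$. The honest partial $3$-coloring (setting $y_u = 0$ for $u \in V_{\rm good}$, $y_u = 3/2$ for $u \in V_{\rm bad}$, and using the standard simplex color-vectors) is feasible with objective at most $\tfrac{3}{2}\epsilon |V|$, so the SDP optimum is at most that. Define $S := \{u : y_u > \gamma\}$; Markov's inequality then gives $|S| \le \tfrac{3\epsilon}{2\gamma}|V|$, well inside the promised budget. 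On the induced subgraph $G[V \setminus S]$ every edge satisfies $\langle v_u, v_w \rangle \le -\tfrac{1}{2} + 2\gamma$, i.e., the SDP solution is a $2\gamma$-perturbed vector $3$-coloring.

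On $V \setminus S$ I would run the KMS rounding: draw Gaussian vectors $g_1,\ldots,g_t$ with $t = \Theta(\log n)$ and color each vertex $u$ based on which projections $\langle v_u, g_i \rangle$ exceed a threshold $\tau$. For an exact vector $3$-coloring this yields an independent set of size $\Omega(n/\Delta^{1/3})$, hence $\tilde O(\Delta^{1/3})$ colors when the maximum degree is $\Delta$. At correlation up to $-\tfrac{1}{2}+2\gamma$, a bivariate Gaussian tail calculation degrades the exponent from $1/3$ to $1/3 + O(\sqrt{\gamma})$. To balance degrees I combine this with Wigderson's trick: any surviving vertex of degree exceeding $n^{3/4 - O(\sqrt{\gamma})}$ has a mostly-good neighborhood that becomes $2$-colorable once the few remaining bad neighbors are absorbed into $S$; this handles the high-degree part using $O(n^{1/4 + O(\sqrt{\gamma})})$ colors, while the low-degree part costs $\tilde O(n^{1/4 + O(\sqrt{\gamma})})$ colors by the perturbed KMS bound. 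The extra bad vertices absorbed during Wigderson can be charged against the remaining slack in the $3\epsilon|V|/\gamma$ budget.

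The main obstacle is controlling the rounding analysis under the $\gamma$-perturbation: the vanilla KMS argument uses the exact angle $-\tfrac{1}{2}$ to bound a bivariate Gaussian tail, and the core challenge is to show that relaxing this to $-\tfrac{1}{2}+2\gamma$ costs only an $O(\sqrt{\gamma})$ \emph{additive} term in the exponent rather than $O(\gamma)$ or worse; this will hinge on a careful Taylor expansion of the relevant orthant probability around $\rho = -\tfrac{1}{2}$, matched against Gaussian concentration in $t$. A secondary technical issue is a tight accounting of two sources of removals into $S$ (slack-based and Wigderson-based) against the total SDP slack $\tfrac{3}{2}\epsilon|V|$, to ensure the overall bound $|S| \le 3\epsilon|V|/\gamma$ is preserved.
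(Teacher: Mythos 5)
Your SDP-plus-thresholding step and your low-degree rounding step are essentially the paper's: the same relaxation (per-vertex slack, edge constraints $\langle v_i,v_j\rangle\le -\tfrac12+O(\text{slack})$, minimize total slack), the same Markov-based removal of high-slack vertices, and the same observation that the surviving graph is approximately vector $3$-colorable so that KMS-style hyperplane rounding still works; in fact the perturbation analysis you flag as the ``main obstacle'' is the routine part, and it costs only $\Delta^{1/3+O(\gamma)}$ (the paper's Theorem \ref{thm:appx-col}), not $O(\sqrt\gamma)$ in the exponent.

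The genuine gap is your degree-reduction step. You propose Wigderson's trick in the form ``a surviving high-degree vertex has a mostly-good neighborhood that becomes $2$-colorable once the few remaining bad neighbors are absorbed into $S$,'' but after thresholding the remaining bad vertices are exactly those whose SDP slack is \emph{below} the threshold, so the SDP gives you no way to tell them apart from good vertices (identifying $V_{\rm good}$ is NP-hard, Fact \ref{fact1}), and nothing prevents them from concentrating inside a single neighborhood; moreover the high-degree vertex itself may be bad, in which case even its purely good neighbors need not induce a $2$-colorable graph, so the structural claim underlying the absorption is false and the proposed charging against the slack budget has nothing to charge (each such vertex contributes less than $\gamma$ to the objective). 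The paper circumvents this without ever identifying bad vertices: since the surviving graph is $(3,\gamma)$-vector colorable, every neighborhood in it is $(2,4\gamma)$-vector colorable (Lemma \ref{lem:appx-2-col(a)}), hence contains no odd cycle of length $O(1/\sqrt{\gamma})$ (Lemma \ref{lem:appx-2-col(b)}), and such graphs can be colored efficiently with $\tilde{O}(n^{O(\sqrt{\gamma})})$ colors by repeatedly extracting large independent sets via the Ramsey-theoretic algorithm (Corollary \ref{corr:coloring}). Each high-degree neighborhood is thus \emph{colored} with $n^{O(\sqrt{\gamma})}$ colors rather than $O(1)$, and removed; with $\Delta=n^{3/4}$ this contributes $(n/\Delta)\,\tilde{O}(n^{O(\sqrt{\gamma})})$ colors, which is where the $O(\gamma^{1/2})$ in the final exponent actually comes from. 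Without some replacement of this idea (or another way to handle neighborhoods that are only approximately vector $2$-colorable), your high-degree case does not go through.
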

We point out that the above theorem gives a bi-criteria approximation guarantee which exhibits the tradeoff between the size of the set $S$, and the number of colors used to color the remaining graph $G[V\setminus S]$. In particular, setting $\gamma = \sqrt{\epsilon}$ in the above theorem gives us the following guarantee. Given a $(1-\epsilon)$-\pthrc graph, one can color $(1 - \sqrt{\epsilon})$-fraction of its vertices using $\tilde{O}(n^{0.25 + \epsilon^{1/4}})$-colors. Using similar techniques we can give an efficient approximation algorithm for the partial $2$-coloring setting as well. For completeness, we formally state the result below{\footnote{We implicitly use the algorithm in the degree reduction step of the algorithm from Theorem \ref{thm:3col-main}. See Claim \ref{cl:step(ii)} for details.} :

\begin{restatable}{prop}{TwoColGen} \label{prop:2col-gen}
		There exists a polynomial time algorithm that takes as input 
		a $(1-\epsilon)$-\pkc{2} graph $G = (V,E)$ and any fixed choice of $\gamma \in [\epsilon, 1/100]$, 
		and produces a set $S \subset V$ such that $\Abs{S} \leq(4\epsilon/\gamma) \Abs{V}$
		and a coloring of $V \setminus S$ using $\tilde{O}(n^{2\gamma})$ colors.
\end{restatable}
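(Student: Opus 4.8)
The plan is to reuse the degree-reduction subroutine of the algorithm behind Theorem~\ref{thm:3col-main} (Claim~\ref{cl:step(ii)}), feeding it the partially-$1$-colorable sub-problems that arise in the $2$-colorable world in place of the partially-$2$-colorable ones that arise in the $3$-colorable world. The structural input is that in a $(1-\epsilon)$-\pkc{2} graph the neighborhood of a \emph{good} vertex is not merely bipartite but \emph{independent}: fixing the bipartition $V_{\rm good} = A \cup B$, for $v \in V_{\rm good}$ the set $N(v) \cap V_{\rm good}$ lies entirely in $A$ or entirely in $B$, so $G[N(v)]$ is $(1-\delta_v)$-\pkc{1} with $\delta_v \le \Abs{N(v) \cap V_{\rm bad}}/\deg(v)$; more generally, every induced subgraph $G[W]$ with $\Abs W \ge \epsilon\Abs V$ has an independent set of size at least $(\Abs W - \epsilon\Abs V)/2$, since $G[W \cap V_{\rm good}]$ is bipartite.

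First I would run the degree-reduction loop with a threshold $D$ (to be fixed at the end, morally $D \approx n^{1-2\gamma}$). While the residual graph $G'$ has a vertex $v$ with $\deg_{G'}(v) \ge D$, peel off $\{v\} \cup N_{G'}(v)$: since $G'[N_{G'}(v)]$ is partially $1$-colorable when $v$ is good, compute a $2$-approximate minimum vertex cover of it, move that cover into $S$, and give the remaining independent part of $N_{G'}(v)$ one fresh color. Successive peeled neighborhoods are pairwise disjoint, so the total number of vertices this phase contributes to $S$ is at most $2 \sum_v \Abs{N_{G'}(v) \cap V_{\rm bad}} \le 2\epsilon\Abs V$; each peel deletes at least $D$ vertices, so the phase uses at most $\Abs V / D$ colors. (For a $v$ that is in fact bad the vertex-cover step could be expensive; this is handled by the same bookkeeping the $3$-colorable algorithm uses, reorganizing the peels so that only cheap neighborhoods are processed.)

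When the loop halts, the residual $G'$ has maximum degree $< D$ and is still $(1-\epsilon')$-\pkc{2} with $\epsilon'\Abs{V(G')} \le \epsilon\Abs V$. I would color it by repeatedly extracting a large independent set via the \indset~SDP, as in the low-degree case of Theorem~\ref{thm:3col-main}: vector-$2$-colorability (up to the $O(\epsilon)$-fraction of bad vertices) together with the degree bound $< D$ yields, by hyperplane rounding, an independent set covering a $\tilde{\Omega}(D^{-c})$-fraction of the current residual, for a small exponent $c = c(\gamma) = O(\gamma)$ coming from the slack built into the SDP. Coloring this set with a fresh color and iterating, the residual shrinks geometrically and is exhausted after $\tilde{O}(D^{c})$ rounds --- except that once the residual drops below $\Theta(\epsilon/\gamma)\Abs V$ vertices the SDP no longer certifies a large independent set, and we add what remains to $S$. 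Choosing $D$ so that both $\Abs V/D$ and $D^{c}$ are $\tilde{O}(n^{2\gamma})$ --- which is possible precisely because the ``$1$-colorable neighborhood'' and ``vector-$2$-colorable'' bounds are stronger than their $3$-color analogues, pushing the exponent from $\tfrac14 + O(\sqrt{\gamma})$ down to $2\gamma$ --- gives the claimed $\tilde{O}(n^{2\gamma})$ colors, while the contributions to $S$ (the vertex covers plus the discarded residual) sum to at most $(4\epsilon/\gamma)\Abs V$.

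The main obstacle is the low-degree phase together with the $S$-budget: one must show that the \indset~SDP certifies an independent set of size $\tilde{\Omega}(D^{-c})\Abs{V(G')}$ in \emph{every} round even though the $\epsilon\Abs V$ bad vertices are adversarially placed and may all survive into the current residual, and simultaneously that the cumulative size of the vertex covers plus the discarded residual never exceeds $(4\epsilon/\gamma)\Abs V$. The partially-$1$-colorable sub-problems themselves are easy --- they are just a $2$-approximation of minimum vertex cover --- so the delicate part is purely the interaction between the degree-reduction bookkeeping, the SDP-rounding guarantee, and the bad-vertex accounting, which is where the constants $2\gamma$ and $4\epsilon/\gamma$ must be tracked.
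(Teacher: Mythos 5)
There is a genuine gap, and it sits exactly where you flag "the main obstacle." Your plan hinges on two steps that are not justified: (1) in the degree-reduction loop, the vertex-cover accounting only works when the peeled vertex $v$ is good, and your parenthetical claim that expensive neighborhoods of bad vertices are "handled by the same bookkeeping the $3$-colorable algorithm uses" refers to bookkeeping that does not exist --- in the $3$-coloring algorithm the degree reduction is run only \emph{after} the SDP thresholding, at which point every surviving neighborhood is approximately vector $2$-colorable, so there is no good/bad case analysis to borrow; and (2) in the low-degree phase you round "the" vector $2$-coloring of the residual graph, but the residual is only \emph{partially} $2$-colorable, so the exact vector-$2$-coloring SDP is infeasible. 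To make it feasible you need the slack-augmented SDP plus a thresholding step that discards high-slack vertices --- and once you re-solve and re-threshold in every round, the bad vertices (which may all survive into later, smaller residuals) force you to discard a growing \emph{fraction} each round, which is precisely the accumulation problem you name but do not resolve. The constants $4\epsilon/\gamma$ and $2\gamma$ are therefore not actually tracked anywhere in the argument.

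The paper's proof avoids all of this and is much shorter: it solves the slack SDP (SDP-P$2$C, with edge constraints $\langle v_i,v_j\rangle \le -1+2z_{ij}$ and $z_{ij}\le w_i+w_j$) \emph{once} on the whole graph, thresholds $S=\{i: w_i\ge \gamma/4\}$ so that $|S|\le 4\epsilon n/\gamma$ by the same Markov argument as Claim~\ref{cl:step(i)}, and observes that $G[V\setminus S]$ is $(2,\gamma)$-vector colorable. By Lemma~\ref{lem:appx-2-col(b)} such a graph has no short odd cycles, a property inherited by \emph{every} induced subgraph, so the iterated independent-set extraction of Corollary~\ref{corr:coloring} colors it with few colors with no further SDP solving, no degree reduction, and no per-round interaction with the bad vertices. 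Your structural observation that good neighborhoods are independent sets is true but unnecessary here; if you want to salvage your route, the missing ingredients are (i) an explicit rule for rejecting bad high-degree vertices (e.g., if the approximate vertex cover of $N(v)$ exceeds $2\epsilon n$ then $v$ itself must be bad and can be moved to $S$, costing at most $\epsilon n$ in total), and (ii) a one-shot slack-SDP thresholding before the independent-set phase so that the rounding guarantee holds uniformly across rounds --- at which point you have essentially reconstructed the paper's argument and the degree-reduction phase becomes superfluous.
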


We also study a semi-random family of partially colorable graphs $\alpha$-\pkcr{k}, which we define as follows.
\begin{definition}
\label{def:pkcr}
An instance of $\alpha$-\pkcr{k} is generated as follows.  
\begin{enumerate}
	\item Let $V$ be a set of $n$ vertices. Arbitrarily partition $V$ into sets $V_{\rm good}$ and $V_{\rm bad}$
		such that $\Abs{V_{\rm good}} \geq \alpha n$.
	\item Add edges between an arbitrary number of arbitrarily chosen pairs of vertices in 
		$V_{\rm good}$ such that the graph induced on $V_{\rm good}$ is $k$-colorable. 
	\item Add edges between an arbitrary number of arbitrarily chosen pairs of vertices in $V_{\rm bad}$. 
	\item Between each pair of vertices in $V_{\rm good} \times V_{\rm bad}$, independently add an edge with probability
		$p$. We call this set of edges $E_0$.
	\item Add arbitrary number of edges between pairs of vertices of $V_{\rm good} \times V_{\rm bad}$. We call this set of edges $E_1$. 	
\end{enumerate}
Output the resulting graph. 
\end{definition}

In the study of approximation algorithms for NP-hard problems, there have been many works
studying algorithms random and semi-random instances of various problems \cite{BS95,FK01,KMM11,MMV12,MMV14}.  
Random and semi-random instances are often good models
for instances arising in practice; designing algorithms specifically for such instances,
whose performance guarantee is significantly better than guarantees for general instances,
could have more applications in practice. 
Moreover, from a theoretical perspective, designing algorithms for semi-random instances
helps us to better understand what aspects of a problem make it intractable. 
We study our semi-random model $\alpha$-\pkcr{k} for the same reasons. 
The following is our main result. 

\begin{restatable}{thm}{ColRandom} \label{thm:3col-random}
Suppose there exists an efficient algorithm which colors a $3$-colorable graph using $n^{\theta}$ colors. Then the following holds for all choices of $\epsilon = \Omega(\log n/n)$ and $p \ge (\epsilon\theta^{-2})^{O(\theta)}$. There exists a polynomial time algorithm that takes as input a graph $G$ sampled from
$(1-\epsilon)$-\pthrcr  and produces a set $S$ such that $\Abs{S} = O\paren{ \epsilon\theta^{-2} n p^{-(O(1/\theta))}}$
and a coloring of $V \setminus S$ using at most $n^{\theta}$ colors with high probability.
Moreover, the algorithm runs in time $n^{O(1/\theta)}{\rm poly}(n)$. 
\end{restatable}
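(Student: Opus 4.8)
The plan is to exploit the randomness of the edge set $E_0$ between $V_{\rm good}$ and $V_{\rm bad}$ to \emph{identify} almost all of the bad vertices, remove a small set $S$ containing them (plus a few good vertices), and then run the assumed $n^\theta$-coloring algorithm on the remaining graph, which will be (almost entirely) $3$-colorable and hence colorable with $n^\theta$ colors. The key structural observation is that a bad vertex $v \in V_{\rm bad}$ has, in expectation, $p |V_{\rm good}|$ neighbours in $V_{\rm good}$ through $E_0$ alone, and since $|V_{\rm good}| \geq (1-\epsilon) n$ is large, a Chernoff bound shows every bad vertex has $\Omega(pn)$ neighbours in $V_{\rm good}$ with high probability (using $\epsilon = \Omega(\log n / n)$ to get a union bound over the at most $\epsilon n$ bad vertices, and $p$ not too small). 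Conversely a good vertex of small degree cannot ``look like'' a bad vertex. However, a good vertex could have large degree too, so degree alone does not separate the two classes — this is where the recursive/bootstrapping structure (and the dependence $p \ge (\epsilon \theta^{-2})^{O(\theta)}$, the running time $n^{O(1/\theta)}$, and the $S$-size bound $O(\epsilon \theta^{-2} n p^{-O(1/\theta)})$) comes in.

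Concretely, I would proceed in $O(1/\theta)$ rounds. In each round, attempt to $3$-color (or $n^\theta$-color, via the assumed algorithm) the current graph; if a proper-enough coloring is found on a $(1 - \epsilon')$-fraction, stop. Otherwise, the induced subgraph on the good vertices is still $3$-colorable, so the obstruction to coloring must involve bad vertices densely connected into the current vertex set. Use the coloring algorithm's failure (or an SDP/combinatorial certificate) to find a set of vertices that is ``locally dense'' — a set $T$ such that many vertices have many neighbours in $T$. Since good–good edges are few relative to the $3$-colorable structure, such a $T$ must be concentrated in $V_{\rm bad}$; remove it. Each removal charges $\Theta(pn)$ edges to each deleted bad vertex via $E_0$, and there are only $O(\epsilon n^2)$ such edges total (the $E_0$ edges incident to $V_{\rm bad}$), so after $O(1/\theta)$ rounds the total number of vertices removed is bounded by $O(\epsilon \theta^{-2} n p^{-O(1/\theta)})$ — the geometric loss of $p^{-1/\theta}$ per round is exactly the price of the recursion depth. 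After the last round the remaining graph has $V_{\rm good}$ essentially intact and at most a $p^{O(1/\theta)}$-fraction of bad vertices left, each of which is then low-degree and can be greedily removed or absorbed into $S$; the leftover graph is $3$-colorable, so the assumed algorithm $n^\theta$-colors it. The total running time is $O(1/\theta)$ invocations of the coloring algorithm plus polynomial overhead, giving $n^{O(1/\theta)} {\rm poly}(n)$.

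The main obstacle I expect is making the ``identify the dense set concentrated in $V_{\rm bad}$'' step rigorous without knowing $V_{\rm good}$, $V_{\rm bad}$, or the planted $3$-coloring. One clean way: maintain the invariant that at the start of each round the current graph restricted to the surviving good vertices is $3$-colorable (which is automatic since we only delete vertices), run the assumed $n^\theta$-coloring algorithm, and argue that \emph{if} it fails to produce an almost-proper coloring then a certifiable dense substructure exists; then show any such substructure, in a sample from $(1-\epsilon)$-\pthrcr, must have a constant fraction of its ``mass'' in $V_{\rm bad}$ — otherwise the random edges $E_0$ would, with high probability, already have forced the structure to be non-$3$-colorable on $V_{\rm good}$, a contradiction. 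Quantifying this trade-off between the density threshold of the removed set and the failure probability of the random graph is the delicate calculation; it is what pins down the precise polynomial dependence of $p$ on $\epsilon$ and $\theta$ in the hypothesis. The other routine-but-necessary ingredients are the Chernoff/union-bound argument for the high-degree property of bad vertices (needing $\epsilon = \Omega(\log n/n)$) and the geometric accounting of deleted vertices across the $O(1/\theta)$ rounds.
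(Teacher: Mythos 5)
There is a genuine gap, and it sits exactly where you flagged it. Your scheme hinges on the step ``if the assumed coloring algorithm fails, extract a certifiable dense substructure concentrated in $V_{\rm bad}$ and delete it.'' But the hypothesis only guarantees the algorithm's behaviour on $3$-colorable inputs; on a non-$3$-colorable graph it may output an arbitrary improper or many-colored assignment with no certificate of any kind, so there is nothing to bootstrap the removal step from. More fundamentally, your plan presumes that after removing a small identifiable set the surviving graph is $3$-colorable, i.e.\ that $V_{\rm bad}$ can essentially be located. The paper's own discussion (Section 1.3) exhibits semi-random instances --- $G[V_{\rm good}]$ a lopsided random bipartite graph, $V_{\rm bad}$ an independent set attached by random edges --- in which $V_{\rm bad}$ is information-theoretically indistinguishable from a part of $V_{\rm good}$, so no procedure (dense-set extraction, degree tests, or otherwise) can whittle the graph down to a $3$-colorable one. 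Your proposal has no mechanism for these instances. A further symptom that the mechanism is off: $O(1/\theta)$ invocations of polynomial-time routines is still polynomial time, so your scheme gives no source for the $n^{O(1/\theta)}$ term in the running time, which in the actual argument comes from searching for odd cycles of length $O(1/\theta)$.

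The paper's proof supplies precisely the missing idea: a dichotomy on $m^*$, the maximum number of vertex-disjoint odd cycles of length at most $1/\delta$ (with $\delta=\theta/10$) inside $G[V_{\rm good}]$. If $m^*$ is large ($> 4\epsilon n/(\delta p^{1/\delta})$), then for each bad vertex the random edges $E_0$ place each such cycle entirely inside its neighborhood with probability at least $p^{1/\delta}$; since the cycles are disjoint these events are independent, and a Chernoff bound plus a union bound (using $\epsilon=\Omega(\log n/n)$) shows every bad vertex has at least $2\epsilon n$ disjoint short odd cycles in its neighborhood, while a good vertex can have at most $\epsilon n$ (any odd cycle in the neighborhood of a good vertex must use a bad vertex, because neighborhoods in a $3$-colorable graph are bipartite). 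Thresholding this greedily computed count recovers $V_{\rm bad}$ exactly, and the assumed algorithm is then run on $G[V_{\rm good}]$; this is where the lower bound on $p$ and the $p^{-O(1/\theta)}$ loss in $|S|$ arise. If instead $m^*$ is small, the algorithm deletes a maximal set of vertex-disjoint short odd cycles from the whole graph --- only $O(\epsilon\theta^{-2}np^{-O(1/\theta)})$ vertices --- and colors the now short-odd-cycle-free remainder with $\tilde{O}(n^{2\delta})\le n^{\theta}$ colors via the Ramsey-type independent-set algorithm (Lemma \ref{lem:ramsey}, Corollary \ref{corr:coloring}), making no attempt to identify $V_{\rm bad}$ and no use of $3$-colorability; this is exactly how the indistinguishable instances are handled. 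Since the algorithm cannot tell which case holds, it runs both branches and selects using the verifiable conditions $|S|\le\epsilon n$ and $L\le n^{\theta}$. Your Chernoff/degree observations are compatible with this argument, but without the short-odd-cycle dichotomy and its fallback branch the proof cannot be completed along the lines you describe.
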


In particular, instantiating the above theorem with the algorithm from \cite{kawarabayashi_thorup_2017}, w.h.p., we can colors $(1- O(\epsilon))n$ fraction of vertices with $\tilde{O}(n^{0.199})$-colors. We also study the partial $2$-coloring problem in the semi-random setting. Our guarantees for this setting are as follows:

\begin{restatable}{thm}{TwoColRandom} \label{thm:2col-random}
		Let $\epsilon = \Omega(\log n/n)$ and $p > \sqrt{\epsilon}$. Then, there exists a polynomial time algorithm that takes as input a graph $G$ sampled from
		$(1-\epsilon)$-\pkcr{2}, and with high probability, produces a set $S \subseteq V$ such that $\Abs{S} = O\paren{ \epsilon n p^{-2}}$ and the induced subgraph on the remaining vertices $G[V\setminus S]$ is $2$-colorable. 
\end{restatable}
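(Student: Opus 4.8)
The plan is to leverage the specific structure of the semi-random model $(1-\epsilon)$-\pkcr{2}. Recall that in a $2$-colorable graph, every connected component has a unique $2$-coloring up to swapping the two colors, and $2$-colorability is equivalent to being bipartite, i.e., having no odd cycles. So the target is to identify a small set $S$ whose removal leaves a bipartite graph, and we want $\Abs{S} = O(\epsilon n p^{-2})$. Since the graph $G[V_{\rm good}]$ is bipartite, any odd cycle in $G$ must use at least one vertex of $V_{\rm bad}$; thus removing $V_{\rm bad}$ entirely certainly works, but $\Abs{V_{\rm bad}} \le \epsilon n$ which is already too large by a factor of roughly $p^{-2}$. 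The point of the semi-random assumption (step 4: each $V_{\rm good}\times V_{\rm bad}$ pair is present independently with probability $p$) is that a typical bad vertex acquires $\Theta(p \Abs{V_{\rm good}}) = \Theta(pn)$ random neighbors in $V_{\rm good}$, and these random edges let us "detect" and "place" bad vertices.

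First I would fix the planted bipartition $(A,B)$ of $V_{\rm good}$ (guaranteed $2$-colorable, hence bipartite) and observe that if a bad vertex $v$ had the property that its random neighborhood $N_0(v) := N(v)\cap V_{\rm good}$ (via $E_0$) lies entirely inside $A$ or entirely inside $B$, then $v$ could be consistently colored — it would not by itself create a conflict with the $V_{\rm good}$ side. The bad vertices we cannot place are those with random neighbors on both sides of the planted bipartition, OR bad vertices participating in odd cycles through $V_{\rm bad}$-internal edges and $E_1$ edges. The key probabilistic step: for a bad vertex $v$ whose $E_0$-degree into $A$ and into $B$ are both "large" (say both $\ge t$ for a threshold $t = \Theta(\log n)$), with high probability we can certify this from the sampled graph and flag $v$ for inclusion in $S$. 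Conversely, among bad vertices that do get flagged, I want to argue there are few of them — but actually there can be up to $\epsilon n$ such vertices, so this naive approach does not give the $p^{-2}$ savings and must be refined. The refinement is the real content: rather than removing all "bad-behaving" bad vertices, I would set up a procedure (likely an SDP relaxation or a combinatorial charging argument against the planted structure) that shows that the \emph{minimum} odd-cycle transversal is small. The intuition for the $p^{-2}$ factor is a birthday-paradox / counting argument: an odd closed walk alternating between $V_{\rm good}$ and $V_{\rm bad}$ using $E_0$ edges on both ends of a $V_{\rm bad}$-vertex pins down, in expectation, about $p^2 n$ constraints per bad vertex, so a set $S$ of size $\epsilon n / p^2$ suffices to hit all of them. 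I expect the actual algorithm to mirror the degree-reduction plus SDP-rounding template used for Theorem \ref{thm:3col-main} and Proposition \ref{prop:2col-gen}, specialized so that in the $k=2$ case the "few colors" output collapses to exact $2$-colorability of the residual graph.

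Concretely, the steps I would carry out are: (i) for each $v\in V$ compute its neighborhood and use the randomness of $E_0$ to estimate, for bad vertices, how "spread" their good-neighborhoods are across the (unknown) planted bipartition — this requires a bootstrapping step since $(A,B)$ is not given, so I would instead work with the structure of $G[V_{\rm good}]$'s components directly, or use an SDP whose integral optimum corresponds to an odd-cycle transversal; (ii) show that the SDP (or LP) value is $O(\epsilon n)$ using the planted solution, and that the random edges $E_0$ force enough "rigidity" that a rounding step loses only a $p^{-2}$ factor rather than the trivial bound; (iii) round to get $S$ with $\Abs{S} = O(\epsilon n p^{-2})$ and output the (unique up to component swaps) $2$-coloring of $G[V\setminus S]$, which is bipartite by construction since $S$ hits all odd cycles; (iv) boost the success probability via a union bound over the $O(n^2)$ random edges and concentration of each bad vertex's random degree (Chernoff), which is where the hypothesis $\epsilon = \Omega(\log n / n)$ and $p > \sqrt{\epsilon}$ get used — the former to guarantee $\Abs{V_{\rm bad}}$ is large enough for concentration, the latter to ensure $p^2 > \epsilon$ so the bound $\epsilon n p^{-2} < n$ is nontrivial and each bad vertex's pair of cross-bipartition degrees concentrates.

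The main obstacle, I expect, is step (ii): upgrading the trivial "remove all of $V_{\rm bad}$" bound to the $p^{-2}$-improved bound. This needs an argument that most bad vertices are actually placeable — i.e., that the set of bad vertices forced into $S$ by conflicting random evidence has size only $O(\epsilon n p^{-2})$ — which is a delicate interplay between the adversarial choices (the bad-internal edges, the $E_1$ edges, and how $V_{\rm bad}$ attaches) and the random $E_0$ edges. I anticipate the proof handles this by an SDP relaxation of odd-cycle transversal together with a region-growing or threshold-rounding argument, exploiting that a bad vertex with $\Omega(pn)$ random neighbors on a single side of the bipartition can be absorbed for free, so that the only "cost" is incurred by genuinely ambiguous vertices, whose count is controlled by a second-moment/birthday-type estimate scaling like $\epsilon n \cdot p^{-2}$.
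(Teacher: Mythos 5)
There is a genuine gap, and it starts with a sign error in your framing: since $p\le 1$, we have $\epsilon n \le \epsilon n p^{-2}$, so removing \emph{all} of $V_{\rm bad}$ is comfortably within the budget $O(\epsilon n p^{-2})$ --- it is not ``too large by a factor of $p^{-2}$.'' The difficulty is not the size of $V_{\rm bad}$ but its \emph{identifiability}: as the paper's discussion of the semi-random model explains, the adversary can make $V_{\rm bad}$ information-theoretically indistinguishable from part of $V_{\rm good}$ (e.g.\ when $V_{\rm good}$ induces very few edges and $V_{\rm bad}$ is an independent set). Because your proposal is aimed at the wrong obstacle, its core step (ii) --- ``upgrading'' to a $p^{-2}$-improved bound via an unspecified SDP relaxation with a rigidity/rounding argument --- is both unnecessary in the regime where bad vertices are detectable and unsubstantiated in the regime where they are not. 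You explicitly flag this step as the main obstacle and do not resolve it, so the proof is incomplete as written.

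The missing idea is the paper's dichotomy on $m^*$, the maximum matching size of $G[V_{\rm good}]$. If $m^* \ge (8\epsilon/p^2)n$, then for each bad vertex $v$ the expected number of matching edges $(a_i,b_i)$ with both endpoints landing in $N_{E_0}(v)$ is $m^*p^2 \ge 8\epsilon n$; by Chernoff and a union bound each bad vertex has $\ge 2\epsilon n$ vertex-disjoint triangles through it (certified by a greedy maximal matching in its neighborhood, a $2$-approximation), whereas a good vertex can have at most $\epsilon n$ disjoint triangles through it since every triangle must contain a bad vertex. Thresholding at $2\epsilon n$ then recovers $V_{\rm bad}$ exactly and removes only $\epsilon n$ vertices. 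If instead $m^* < (8\epsilon/p^2)n$, K\H{o}nig's theorem gives $G[V_{\rm good}]$ a vertex cover of size $< (8\epsilon/p^2)n$, hence $G$ has one of size $O(\epsilon n p^{-2})$, and the greedy $2$-approximation for vertex cover yields an independent set (trivially $2$-colorable) on all but $O(\epsilon n p^{-2})$ vertices --- this second branch, not any birthday-paradox counting, is where the $p^{-2}$ factor enters. Your bipartition-straddling detection idea is in the spirit of the first branch (a random neighbor on each side of an edge of $G[V_{\rm good}]$ is exactly a triangle), but without the disjointness-based quantitative separation between good and bad vertices, and without the small-matching fallback, the argument does not go through.
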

In particular, in the above theorem the number of vertices removed is bounded by $O(\epsilon n)$ which is stronger than the best known bound of $O(\sqrt{\log n}.\epsilon n)$~\cite{ACMM05} in the adversarial setting.

\subsection{Related Work}
\label{sec:related}
\paragraph{$3$-colorable graphs.} 
There is extensive literature on algorithms for coloring $3$-colorable graphs. Wigderson \cite{wigderson_1983} gave a simple combinatorial algorithm that used $O(n^{\frac{1}{2}})$ colors. Blum \cite{Blum94} improved the number of colors used to $\tilde O(n^\frac{3}{8})$. These algorithms used purely combinatorial techniques. Karger, Motwani and Sudan \cite{KMS98} used semidefinite programming to develop an algorithm, which when balanced with Wigderson's technique \cite{wigderson_1983} used $\tilde{O}({n}^\frac{1}{4})$ colors. Blum and Karger \cite{blum_karger_1997} improved the number of colors used to $\tilde{O}({n}^\frac{3}{14})$ by combining the techniques used in \cite{Blum94} and \cite{KMS98}. Arora, Chlamtac and Charikar \cite{AC06} got the bound down to $\tilde{O}({\Delta}^{0.21111})$ using techniques from the ARV algorithm \cite{arora2009expander}, which was further improved by Chlamtac \cite{chlamtac_2007} to $\tilde{O}({n}^{0.2072})$ using SDP hierarchies. Using new combinatorial techniques, Kawarabayashi and Thorup improved the approximation bound to $\tilde{O}({n}^{0.2049})$ in \cite{kawarabayashi_thorup_2012}. Subsequently, by combining their techniques with \cite{chlamtac_2007}, they were able to give a approximation of $\tilde{O}({n}^{0.19996})$ \cite{kawarabayashi_thorup_2017}, which is the current state of the art.

\paragraph{Partially $2$-colorable graphs.}
The partial $2$-coloring problem, better known as Odd Cycle Transversal (OCT) in the literature, has also been studied extensively. Formally, the setting here is as follows. We are given a $(1-\epsilon)$-partially $2$-colorable graph $G = (V,E)$ and the objective is to find a set $S$ of minimum size such that $G[V\setminus S]$ is $2$-colorable (i.e., odd cycle free). Yannakakis first showed that it is NP-Complete in \cite{Yann78}. Later, Khot and Bansal~\cite{KB09} showed that OCT is hard to approximate to any constant factor, assuming the Unique Games Conjecture. From the algorithmic side,  via a reduction through the Min2CNF Deletion problem, \cite{GVY96} gave a $O(\log n)$ approximation for the problem. This was later improved to $O(\sqrt{\log n})$ by \cite{ACMM05} by using techniques from the Arora-Rao-Vazirani~\cite{arora2009expander} algorithm for sparsest cut. This problem has also been studied under the lens of parameterized complexity. In \cite{RVS04}, Reed et al. showed that OCT is fixed parameter tractable when parameterized by the number of bad vertices, following which a sequence of works \cite{KB09}\cite{NVS12}\cite{LNVS14} gave algorithms with improved running times.  

\paragraph{Partially $3$-colorable graphs}

In contrast to the $3$-colorable setting, there has been very little work on coloring partially $3$-colorable graph. The paper which is closest to our setting is by Kumar, Louis and Tulsiani~\cite{kumar_louis_tulsiani}, which also addresses the partial $3$-coloring problem, albeit in a more restrictive setting.  Assuming that the $(1-\epsilon)$-partially $3$-colorable graph has threshold rank $r$ and the $3$-coloring on the good vertices satisfies certain psuedorandomness properties, they give an algorithm which $3$-colors $1-O(\gamma + \epsilon)$ fraction of vertices in time $(r.n)^{O(r)}$. 

\paragraph{Graph problems in Semi-random Models}

The semi-random model used in this paper is similar to semi-random models which have been considered for the Max-Independent Set problem~\cite{BS95}~\cite{FK01}~\cite{stein17}~\cite{MMT18}. Semi-random models offer a natural way of understanding the complexity of problems in settings which are less restrictive than worst case complexity, but are still far from being average case. While semi-random models were first introduced for studying graph coloring in ~\cite{BS95}, it has also subsequently been used to study several other fundamental problems such as Unique Games~\cite{KMM11}, Graph Partitioning~\cite{MMV12}, Clustering~\cite{MMV14}, to name a few.The problem of coloring $3$-colorable graphs has also been studied in average-case and planted models. Alon and Kahale \cite{AK97} gave an efficient algorithm that finds an exact $3$-Coloring of a random $3$-Colorable graph with high probability. David and Fiege~\cite{FR16} studied the complexity of finding a planted random/adversarial $3$-coloring for both adversarial and random host graphs.

\subsection{Discussion and Proof Overview}
\label{sec:overview}

{\bf Adversarial Model}: The key component in most approximation algorithms for $3$-coloring involves solving a SDP relaxation of the $3$-coloring problem, and followed by a randomized rounding procedure for coloring the graph. The standard SDP relaxation for $3$-coloring is the following which was introduced in \cite{KMS98}:
\begin{SDP}[Exact $3$-Coloring SDP]
\label{SDP:kms}
\begin{eqnarray*}
	\text{minimize}  &  0 & \\
	\text{subject\space to}&  v_i\cdot v_j \le -\frac{1}{2} & \quad \forall \{i,j\}\in E \\
	& \|v_i\|^2=1 &\forall i \in V
\end{eqnarray*}\\
\end{SDP}

SDP \ref{SDP:kms} doesn't optimize any objective function, 
it finds a feasible solution which satisfies all the constraints of SDP. The intended solution to the above SDP is as follows. Let $\sigma:V \to \set{1,2,3}$ be any legal coloring of $G$. Furthermore, let $u_1,u_2,u_3 \in \mathbbm{R}^2$ be any three unit vectors satisfying $\langle u_i , u_j \rangle = -1/2$ for every $i,j \in \{1,2,3\}, i \neq j$. We identify the vector $u_i$ with the color $i$, and assign $v_j = u_{\sigma(j)}$ for every $j \in V$. It can be easily verified that this is a feasible solution to the above SDP. As is usual, while the SDP in general may not return the above vector coloring, one can round a feasible vector coloring to color the graph using not too many colors \cite{KMS98}. 
The approximation guarantee is usually of the form $\Delta^c$ (for some $c \in (0,1)$), where $\Delta$ is the maximum degree of the graph. 

Since in general, one cannot hope to have a degree bound on the graph, the above step is usually preceded by a {\em degree reduction} sub-routine. Note that if a graph is $3$-colorable (more generally $k$-colorable), then the graph induced on the neighbours of any vertex $v$ is $2$-colorable (more generally $k-1$ colorable). Since a $2$-colorable graph can be colored with $2$ colors efficiently, the graph induced on any vertex and its neighbours can be colored efficiently with $3$ colors. 
Therefore, fixing a threshold $\Delta$, this procedure iteratively removes vertices (and their neighbours) having degree larger than $\Delta$ from the graph while coloring them with few colors, and terminates when maximum degree of the remaining graph is at most $\Delta$.
In particular, if the degree reduction step uses $f(n,\Delta)$ colors, then the total number of colors used by the algorithm is at most $f(n,\Delta) + \Delta^c$. Then one can optimize the choice of $\Delta$ for giving the best possible approximation guarantee. This degree reduction approach and its variants, first studied by Wigderson \cite{wigderson_1983}, has been subsequently used in almost all known approximation algorithms for graph coloring 

In translating the above template to the setting of partially $3$-colorable graphs, we face several
immediate challenges. SDP \ref{SDP:kms} is guaranteed to return a feasible solution only for $3$-colorable graphs, 
it might be infeasible if the graph is not $3$-colorable.
If we could compute the set of good vertices then we could use
SDP \ref{SDP:kms} only on the set of good vertices. 
However, in general, the problem of identifying the set of good vertices is NP-hard (Fact \ref{fact1}).
Finally, the preprocessing steps for degree reduction rely heavily on the combinatorial 
structural properties of the neighborhood of vertices in {\em exactly} $3$-colorable graphs, 
which, in general, may not be satisfied by  {\em partially} $3$-colorable graphs.

Our approach is to begin with an SDP relaxation that tries to solve both problems together: identifying
the set of bad vertices, and coloring the set of good vertices. 
We introduce variables $w_1,w_2,\ldots,w_n$ where the $i^{th}$ variable $w_i$ is 
meant to indicate if vertex $i$ is bad. Additionally, for every edge $(i,j) \in E$, we introduce
slack variables $z_{ij}$ which are meant to indicate if at least one of the vertices $i,j$ is bad. 
Using the slack variables we relax the edge constraints as $\langle v_i, v_j \rangle \le -1/2 + (3/2)z_{ij}$. 
Finally, we connect the edge indicator variables with vertex indicator variables using constraints 
of the form $z_{ij} \le w_i + w_j$. 
Since we want the 
set of bad vertices to be small, our objective function will be to minimize $\sum_{i \in V} w_i$.
Our SDP relaxation is the following.  

\begin{SDP}[Partial $3$-Coloring SDP]
\label{SDP:p3c}
\begin{eqnarray*}
	\text{minimize}  &  \sum_{i\in V} {w_i} & \\
	\text{subject\space to}&  \langle v_i,  v_j \rangle \le -\frac{1}{2}+\frac{3}{2}z_{ij} & \quad \forall \{i,j\}\in E \\
	& z_{ij} \leq w_i+ w_j & \forall \{i,j\}\in E \\
	& 0 \leq z_{ij} \leq 1 & \forall \{i,j\}\in E \\
	& 0 \leq w_{i} \leq 1 &\forall i\in V \\
	& \|v_i\|^2=1 &\forall i \in V
\end{eqnarray*}\\
\end{SDP}

Since the optimal ``integer solution'' forms a feasible solution to the SDP relaxation, 
it is easy to show that for a $(1-\epsilon)$-partially $3$-colorable graph, the optimal of 
the above SDP is at most $\epsilon n$. Therefore by Markov's inequality, we get that for a 
large fraction of $i \in [n]$, the $w_i$ variables are small. Let $V' \subset V$ be the set 
of vertices with small $w_i$. Since $\Abs{V\setminus V'} = O(\epsilon n)$, we can focus on 
coloring the induced subgraph $G' = G[V']$. $G'$ has the following nice property: 
{\em for every edge $(i,j)$ in G', the corresponding edge constraint is approximately satisfied 
i.e., $\langle v_i , v_j \rangle \le -1/2 + o_\epsilon(1)$, where the second term goes to $0$ as 
$\epsilon$ goes to $0$}. We call such graphs as being approximately vector $3$-colorable (See Definition \ref{defn:appx-vect-col} for a formal description). 
We use this property crucially in designing our preprocessing step. 

We observe that the neighborhood of any vertex in an approximately vector $3$-colorable graph is 
approximately vector $2$-colorable. Furthermore, we show that approximately vector $2$-colorable graphs 
are {\em short odd cycle} free. Graphs having this property are known to have large independent sets 
which can be found efficiently \cite{RS85ramsey}. Thus one can find such large independent sets recursively to 
color the neighborhood of large degree vertices using a small number of colors. 

For the randomized rounding step, we observe that hyperplane rounding based procedures are naturally robust to small perturbations, and the arguments for analyzing the guarantees of such procedures hold even when the edge constraints are approximately satisfied. In particular, we can use known randomized rounding algorithm as is, while adapting the analysis to account for the edge constraints being satisfied approximately.

{\bf Semi-random model}: While the guarantees of our algorithm from the adversarial setting also apply to the semi-random instances, here we seek to achieve the best known approximation bounds for exactly $3$-colorable graphs. We begin by describing two distinct classes of instances which illustrate the technical challenges in designing such an algorithm.

In this setting, the adversary is free to choose $G[V_{\rm bad}]$ in a way such that it is noisy and has large chromatic number (e.g, graphs sampled from Erdos Renyi random model). For such instances, it is easy to see that the only way an algorithm can have good approximation guarantees is when it can eliminate a significant fraction of from $V_{\rm bad}$.  Then, for a start, one can hope to address this setting by first using a preprocessing step that deletes $V_{\rm bad}$ and then running the best possible approximation algorithm on the graph induced on the remaining vertices.

On the other hand, the adversary can also choose $G[V_{\rm bad}]$ in a way so that it is {\em structurally indistinguishable} from the good subgraph $G[V_{\rm good}]$. For instance, suppose the good subgraph $G[V_{\rm good}]$ is a randomly sampled unbalanced bipartite graph, where the smaller side (which we call $V_S$) has size at most $\epsilon n$. Then the adversary can choose $V_{\rm bad}$ to be an independent set, in which case the entire graph is $3$-colorable. In particular, it is information theoretically impossible to distinguish the set $V_{S}$ from $V_{\rm bad}$, since they are both independent sets and the edges incident on them are identically distributed. While the instances constructed here make it difficult to identify $V_{\rm good}$, they are also naturally easy instances for us. In particular, these instances are also $(1-\epsilon)$-partially $2$-colorable, and one can use tools for coloring partially $2$-colorable graphs to color these instances with small number of colors. 

However, the two cases above clearly do not cover the full range of instances that we can encounter in our model. Therefore, we need a way to relax the above two characterizations which allows for a seamless transition from one class of instances to other. It turns out that we can robustly characterize both classes of instances by the number of {\em vertex disjoint short odd cycles} present in the graph. Informally, if the number of short odd cycles is large, then with high probability, they will show up in the neighborhood of the bad vertices, and therefore this can be used to identify and eliminate $V_{\rm bad}$. We can then simply run the best known approximation algorithm on the remaining induced graph $G[V_{\rm good}]$. On the other hand, if the number of short odd cycles is small, by eliminating a small fraction of vertices, we can make the graph short odd cycle free. Finally, as discussed in the adversarial model setting, such graphs can be colored efficiently using a small number of colors by recursively finding large independent sets~\cite{RS85ramsey}.

\section{Preliminaries}

We introduce some notation used frequently in this paper. Throughout the paper, for a $(1-\epsilon)$-partially $3$-colorable graph $G = (V,E)$, we will write $V = V_{\rm good} \uplus V_{\rm bad}$ where $V_{\rm good}$ and $V_{\rm bad}$ are the set of good vertices and bad vertices as defined in Definition \ref{def:pkc}. For a subset $V' \subseteq V$, we use $G[V']$ to denote the subgraph induced on the set of vertices $V'$. For a subgraph $G' \subseteq G$, we shall use ${\rm vert}(G')$ to denote the vertex set of $G'$. Additionally, for any vertex $i \in {\rm vert}(G')$, we use $N_{G'}(i)$ denote the set of neighbors of $i$ in the graph $G'$. We use $\mathbbm{1}(\cdot)$ to denote the indicator function, and $\tilde{O}(\cdot)$ to hide terms which are polylogarithmic in the number of vertices. 
\paragraph{Approximate Vector Coloring.}
We begin by recalling the notion of vector coloring of a graph which was introduced in \cite{KMS98}.
\begin{definition}[Vector Coloring]					
\label{defn:vect-col}
Given a positive integer $k\in \mathbbm{N}$, we say that a graph $G = (V,E)$ is $k$-vector colorable 
if there exists unit vectors $v_1,v_2,\ldots,v_n \in \mathbbm{R}^d$ for some $d \in \mathbbm{N}$ which satisfy
\begin{equation*}
	\langle v_i , v_j \rangle \leq -\frac{1}{k-1}  \qquad \qquad \forall \set{i,j} \in E.
\end{equation*}
	
\end{definition}

We will use the notion of {\em approximate vector colorings} of a graph, which 
we define as follows. 

\begin{definition}[Approximate Vector Coloring]			
\label{defn:appx-vect-col}
Given a positive integer $k\in \mathbbm{N}$ and a $\gamma > 0$, we say that a graph $G = (V,E)$ is 
$(k,\gamma)$-vector colorable if there exists unit vectors $v_1,v_2,\ldots,v_n \in \mathbbm{R}^d$ 
for some $d \in \mathbbm{N}$ which satisfy
\begin{equation*}
	 \langle v_i , v_j \rangle \le -\frac{1}{k-1} + \gamma \qquad \qquad \forall \set{i,j} \in E.
\end{equation*}
\end{definition}
Observe that a graph that $(k,0)$ vector colorable is vector-$k$-colorable. We now state a couple of lemmas which illustrate some useful properties of approximate vector colorings. In \cite{KMS98}, it was observed that the vector chromatic number of sub-graph induced on the neighborhood of a vertex is strictly less than the vector chromatic number of the actual graph. In the following lemma, we observe that this property can be extended to approximate vector colorings as well.

\begin{lemma}					\label{lem:appx-2-col(a)}
	Let $G = (V,E)$ be $(3,\gamma)$-vector colorable, for some $0 < \gamma <1/10$. Then for any vertex $i \in V$, the graph induced on $N(i)$ is $(2,4\gamma)$-vector colorable.
\end{lemma}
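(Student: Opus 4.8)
The plan is to reduce to the neighborhood of a single vertex $i$ and exploit the fact that in a $(3,\gamma)$-vector coloring, all vectors $v_j$ for $j \in N(i)$ are forced to lie (approximately) in the half-space "opposite" to $v_i$, which is an essentially $2$-dimensional geometric constraint. Concretely, fix the SDP solution $\{v_1,\dots,v_n\}$ witnessing $(3,\gamma)$-vector colorability, and for each $j \in N(i)$ decompose $v_j = \alpha_j v_i + u_j$ where $u_j \perp v_i$. The edge constraint $\langle v_i, v_j\rangle \le -1/2 + \gamma$ gives $\alpha_j \le -1/2 + \gamma$, so in particular $\|u_j\|^2 = 1 - \alpha_j^2 \le 3/4 + \gamma$ and (using $\gamma < 1/10$) $\alpha_j$ is bounded away from $0$ in the negative direction.

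First I would write, for an edge $\{j,k\} \in E$ with $j,k \in N(i)$, the inner product in this decomposition: $\langle v_j, v_k \rangle = \alpha_j\alpha_k + \langle u_j, u_k\rangle \le -1/2 + \gamma$. Since $\alpha_j, \alpha_k \in [-1, -1/2+\gamma]$, the product $\alpha_j\alpha_k$ is nonnegative, so rearranging yields $\langle u_j, u_k \rangle \le -1/2 + \gamma - \alpha_j\alpha_k \le -1/2 + \gamma - 1/4 = -3/4 + \gamma$ (using $\alpha_j\alpha_k \ge 1/4$ when both are $\le -1/2+\gamma$; one has to be slightly careful here since $\alpha_j$ could be less than $-1$? no, $|\alpha_j| \le 1$, and $\alpha_j \le -1/2+\gamma < 0$ forces $\alpha_j \in [-1, -1/2+\gamma]$, hence $\alpha_j\alpha_k \ge (1/2-\gamma)^2$). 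The natural candidate vectors to witness $(2,4\gamma)$-vector colorability of $G[N(i)]$ are the normalized vectors $\hat u_j \defeq u_j / \|u_j\|$.

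The main computation is then to bound $\langle \hat u_j, \hat u_k\rangle = \langle u_j, u_k\rangle / (\|u_j\|\|u_k\|)$. We have the numerator $\le -(3/4) + \gamma - (\text{something})$ wait — let me be cleaner: $\langle u_j, u_k \rangle \le -1/2 + \gamma - (1/2-\gamma)^2 = -3/4 + 2\gamma - \gamma^2 \le -3/4 + 2\gamma$. For the denominator, $\|u_j\|^2 = 1 - \alpha_j^2 \le 3/4 + \gamma$ (since $\alpha_j^2 \ge 1/4 - \gamma$), so $\|u_j\|\|u_k\| \le 3/4 + \gamma$. Therefore $\langle \hat u_j, \hat u_k \rangle \le (-3/4 + 2\gamma)/(3/4 + \gamma)$. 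Expanding via $1/(3/4+\gamma) = (4/3)(1 - (4/3)\gamma + O(\gamma^2))$ and multiplying out, this is $-1 + (2\gamma + (3/4)(4/3)\gamma)\cdot(4/3) + O(\gamma^2) = -1 + (2\gamma + \gamma)(4/3) + O(\gamma^2) = -1 + 4\gamma + O(\gamma^2)$, which is $\le -1 + 4\gamma$ for $\gamma$ small enough — I should double-check the constant tracks cleanly to exactly $4\gamma$ in the bound claimed, possibly absorbing the $O(\gamma^2)$ using $\gamma < 1/10$, or the authors may be slightly loose. One edge case I must handle: if some $\|u_j\| = 0$, i.e. $v_j = \pm v_i$; but $\langle v_i, v_j\rangle \le -1/2+\gamma < 0$ rules out $v_j = v_i$, and $v_j = -v_i$ is fine since then any neighbor $k$ of $j$ in $N(i)$ would need $\langle v_k, v_j\rangle = -\langle v_k, v_i\rangle \le -1/2+\gamma$, i.e. $\langle v_k, v_i\rangle \ge 1/2 - \gamma$, contradicting $\langle v_k, v_i\rangle \le -1/2+\gamma$; so no such edge exists and $j$ is isolated in $G[N(i)]$ — assign it an arbitrary unit vector.

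The step I expect to be the mildly delicate obstacle is nailing the constant: getting from $(-3/4+2\gamma)/(3/4+\gamma)$ to a clean $-1 + 4\gamma$ requires controlling the second-order term, and one must verify $\gamma < 1/10$ is enough slack to absorb it (or re-examine whether a tighter bound on $\alpha_j\alpha_k$ and $\|u_j\|$ is needed). Everything else — the decomposition, the sign analysis of $\alpha_j$, the isolated-vertex corner case — is routine linear algebra. I would structure the writeup as: (1) set up the decomposition and record $\alpha_j \le -1/2+\gamma$; (2) lower-bound $\alpha_j\alpha_k$ and upper-bound $\|u_j\|^2$; (3) combine to bound $\langle \hat u_j, \hat u_k\rangle$ and simplify the fraction; (4) dispose of the degenerate $v_j = -v_i$ case.
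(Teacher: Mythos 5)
Your proposal is correct and takes essentially the same route as the paper's own proof: the identical decomposition of each $v_j$, $j\in N(i)$, into its component along $v_i$ and the orthogonal remainder, normalization of the orthogonal part, and the same algebraic bound on $\langle \hat u_j,\hat u_k\rangle$ over edges of $G[N(i)]$. The constant you flagged in fact resolves cleanly, since $\frac{-3/4+2\gamma}{3/4+\gamma}\le -1+4\gamma$ holds exactly for all $\gamma\ge 0$ (cross-multiplying reduces it to $0\le 4\gamma^2$), and your treatment of the degenerate $v_j=-v_i$ case is a small point of care that the paper's proof leaves implicit.
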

\begin{proof}
		The proof of this lemma follows along the lines of Lemma 4.3 from \cite{KMS98}, which says that subgraphs induced by neighborhoods of vertices in vector $3$-colorable graphs are vector $2$-colorable. Without loss of generality, let $N_G(i) = \{1,2,\ldots,r\}$ and let $\{v_1,v_2,\ldots,v_r\}$ be the set of vectors which are a $(3,\gamma)$-vector coloring of $N_G(i)$. For every $j \in [r]$, we can write $v_j = v^{\|}_j + v^\perp_j$ where $v^{\|}_j$ and $v^\perp_j$ are the projections of $v_j$ along $v_i$ and $({\rm span}(v_i))^\perp$ respectively. Finally, for every $j \in [r]$ we define $\tilde{v}_j := v^\perp_j/\|v^\perp_j\|$ to be unit vector given by the projection of $v_j$ on the subspace $({\rm span}(v_i))^\perp$. It can be easily verified that $\tilde{v}_1,\tilde{v}_2,\ldots,\tilde{v}_r$ is a $(2,4\gamma)$-vector coloring of the graph induced on $N(v)$. To see this, fix any $j \in V$. By construction, we have $\|v^{\|}_j\| = |\langle v_i , v_j \rangle | \geq \frac12 - \gamma$, and therefore $\|v^\perp_j\| = \sqrt{1 - \|v^{\|}_j\|^2} \leq \sqrt{\frac34 + \gamma  - \gamma^2}$. Therefore for any $j,j' \in [r]$ such that $(j,{j'}) \in E$, using the orthonormal decomposition of $v_j$ and $v_{j'}$ we have 
		\begin{eqnarray*}
			\langle \tilde{v}_j , \tilde{v}_{j'} \rangle = \left\langle \frac{v^\perp_j}{\|v^\perp_j\|}, \frac{v^\perp_{j'}}{\|v^\perp_{j'}\|} \right\rangle 
			&=& \frac{1}{\|v^\perp_j \| \|v^\perp_{j'}\|}\Big(\langle v_j, v_{j'} \rangle - \langle v^{\|}_{j} , v^{\|}_{j'} \rangle\Big) \\
			&=& \frac{1}{\|v^\perp_j \| \|v^\perp_{j'}\|}\Big(\langle v_j, v_{j'} \rangle - \langle v_i, v_{j} \rangle\langle v_i, v_{j'} \rangle  \Big) \\
			&\le& \frac{1}{\Big(\frac34 + \gamma  - \gamma^2\Big)}\Big(-1/2 + \gamma - \Big(\frac12 - \gamma\Big)^2\Big) \\
			&\le& - 1 + 4 \gamma
		\end{eqnarray*}
		Since the above holds for any pair of vertices $j,j' \in [r]$ which forms an edge, the claim follows.
\end{proof}
The next lemma says that approximately vector $2$-colorable graphs cannot contain short odd cycles. 
\begin{lemma}					\label{lem:appx-2-col(b)}
	Let $G = (V,E)$ be a $(2,\gamma)$-vector colorable, where $\gamma \le 1/16$. Then $G$ does not contain odd cycles of length at most $1/8\sqrt{\gamma}$.
\end{lemma}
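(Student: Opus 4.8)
The plan is to show that a short odd cycle forces a vector configuration that is geometrically impossible when $\gamma$ is small. Suppose for contradiction that $G$ contains an odd cycle $C = (i_1, i_2, \ldots, i_{2\ell+1}, i_1)$ of length $m = 2\ell+1 \le 1/(8\sqrt{\gamma})$, with $(2, \gamma)$-vector coloring $v_1, \ldots, v_n$. Along the cycle, consecutive vectors satisfy $\langle v_{i_t}, v_{i_{t+1}} \rangle \le -1 + \gamma$, so each consecutive pair is \emph{nearly antipodal}: writing $v_{i_{t+1}} = -v_{i_t} + \delta_t$, the error term $\delta_t$ has small norm since $\|{-v_{i_t}} - v_{i_{t+1}}\|^2 = 2 + 2\langle v_{i_t}, v_{i_{t+1}}\rangle \le 2\gamma$, i.e. $\|\delta_t\| \le \sqrt{2\gamma}$.

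The key step is a ``walking around the cycle'' argument: I would track how far $v_{i_{t}}$ drifts from $(-1)^{t-1} v_{i_1}$ as $t$ increases. By the triangle inequality and induction, $\| v_{i_{t}} - (-1)^{t-1} v_{i_1} \| \le (t-1)\sqrt{2\gamma}$, since each step introduces an additional error of norm at most $\sqrt{2\gamma}$ (negation is an isometry, so it does not amplify the accumulated error). Going all the way around, after $m = 2\ell+1$ steps we return to $i_1$, and because $m$ is odd, $(-1)^{m-1} = (-1)^{2\ell} = 1$, so the ``expected'' endpoint is $+v_{i_1}$. But the actual relation between $v_{i_m}$ and $v_{i_1}$ is that they are nearly antipodal (they are adjacent on the cycle): $\langle v_{i_m}, v_{i_1} \rangle \le -1 + \gamma$, hence $\|v_{i_m} - v_{i_1}\| \ge \|v_{i_m} + v_{i_1}\|$... more directly, $\|v_{i_m} - v_{i_1}\|^2 = 2 - 2\langle v_{i_m}, v_{i_1}\rangle \ge 4 - 2\gamma$, so $\|v_{i_m} - v_{i_1}\| \ge \sqrt{4 - 2\gamma} \ge \sqrt{3}$ (say). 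On the other hand, the drift bound with $t = m$ gives $\|v_{i_m} - v_{i_1}\| \le (m-1)\sqrt{2\gamma} = 2\ell \sqrt{2\gamma}$. Combining, $\sqrt{3} \le 2\ell\sqrt{2\gamma} < m\sqrt{2\gamma} \le \frac{1}{8\sqrt{\gamma}} \cdot \sqrt{2\gamma} = \frac{\sqrt{2}}{8}$, which is false. This contradiction proves that no such short odd cycle exists. (The constants are loose; the condition $\gamma \le 1/16$ is used only to keep the nearly-antipodal estimates clean, e.g. to ensure $-1+\gamma$ is safely negative and $\sqrt{4-2\gamma}$ is bounded below.)

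I expect the main obstacle to be bookkeeping the accumulated error correctly and choosing the bound $1/(8\sqrt{\gamma})$ so that the final inequality is strictly violated — one must be careful that the per-step error is measured against the \emph{negated} previous vector and that negation does not degrade the previously accumulated error. A cleaner alternative to the inductive triangle-inequality argument is to use the inner product directly: show by induction that $\langle v_{i_1}, (-1)^{t-1} v_{i_t} \rangle \ge 1 - (t-1)\gamma'$ for an appropriate $\gamma'$ comparable to $\gamma$ (using Cauchy–Schwarz to propagate: if $u, u'$ are unit vectors with $\langle u, u'\rangle \ge 1-a$ and $\langle u', w \rangle \ge 1 - b$ for unit $w$, then $\langle u, w\rangle \ge 1 - a - b - 2\sqrt{ab}$ or a simpler $1 - (\sqrt a + \sqrt b)^2$-type bound); then at $t = m$ the odd parity gives $\langle v_{i_1}, v_{i_m}\rangle \ge 1 - O(\ell^2\gamma)$ while the edge constraint gives $\langle v_{i_1}, v_{i_m} \rangle \le -1 + \gamma$, contradiction once $\ell^2 \gamma$ is small enough, which is exactly what $m \le 1/(8\sqrt\gamma)$ guarantees. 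Either route works; I would go with whichever produces the cleanest constants matching the statement.
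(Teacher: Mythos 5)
Your proposal is correct and follows essentially the same route as the paper's proof: assume a short odd cycle, note that consecutive vectors are nearly antipodal ($\|v_i + v_{i+1}\| \le \sqrt{2\gamma}$), accumulate the drift around the cycle by the triangle inequality, and contradict the fact that the first and last cycle vertices are adjacent and hence at distance greater than $1$. The only cosmetic difference is bookkeeping — you track the parity sign $(-1)^{t-1}$ step by step, while the paper groups the walk into two-step hops $\|v_i - v_{i+2}\| \le 4\sqrt{\gamma}$ — and your constants check out.
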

\begin{proof}
		Let $v_1,v_2,\ldots,v_n$ be the $(2,\gamma)$-vector coloring of $G$. For contradiction, let $C$ be an odd cycle in $G$ of length $r \le 1/(8\sqrt{\gamma})$. Without loss of generality, let $C = \{1,2,\ldots,r\}$, such that for every $i \in [r]$, the pair $\{i,({i\mod r}) + 1\}$ forms an edge. Let $r = 2k + 1$. Now for any $i \in [r]$, we have $-1 \le \langle v_i,v_{i+1} \rangle \le - 1 + \gamma$. Since $v_i, v_{i+1}$ are unit vectors, we have
		\begin{equation}
			\|v_i + v_{i+1}\|^2 = \|v_i\|^2 + \|v_{i+1}\|^2 + 2 \langle v_i, v_{i+1} \rangle \le 2\gamma
		\end{equation}
		which implies that $\|v_i + v_{i+1}\| \le 2\sqrt{\gamma}$ i.e, any consecutive pair of vectors are {\em almost anti-podal}. Then, for any $i \in [r]$ we also get that 
		\begin{equation}			\label{eq:vec-bound}
			\|v_i - v_{i+2}\| \le \|v_i + v_{i+1}\| + \|v_{i+1} + v_{i+2}\| \le 4\sqrt{\gamma}
		\end{equation}
		We shall now use the above observations to arrive at a contradiction. From the upper bound on $r$, we have $k \leq (r-1)/2 \leq 1/(16\sqrt{\gamma})$, and hence using Eq. \ref{eq:vec-bound} we get that 
		\begin{eqnarray}
			\|v_1 - v_r\| \le \sum_{j = 0}^{k-1} \|v_{1 + 2j} - v_{1 + 2(j+1) } \| \le 4k\sqrt{\gamma} < 1/4
		\end{eqnarray}
		But on the other hand, since $v_1,v_r$ are consecutive vertices in the cycles $C$, we also have $\langle v_1 ,v_r \rangle \le -1 + \gamma$ which implies that $\|v_1 - v_r\| \ge \sqrt{4 - 4 \gamma} > 1$, which give us the contradiction.
\end{proof}

\paragraph{Coloring graphs without short odd cycles}

A key combinatorial tool used in our paper is the following Ramsey theoretic result which says that graphs without short odd cycles contain large independent sets which can be found efficiently. 

\begin{lemma}\cite{RS85ramsey}				\label{lem:ramsey}
	There exists a constant $\epsilon_0 \in (0,1)$ such that for every choice of $0 < \epsilon < \epsilon_0$ the following holds. Let $G = (V,E)$ be a graph without odd cycles of length at most $1/\epsilon$. Then, $G$ contains an independent set of size at least $|V|^{1 - 2\epsilon}$. Furthermore, there exists a polynomial time algorithm which finds such an independent set.
\end{lemma}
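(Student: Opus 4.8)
The plan is to exploit the fact that a graph with no short odd cycle is \emph{locally bipartite}: around any vertex, every ball of radius up to roughly $\tfrac{1}{2\epsilon}$ induces a bipartite subgraph. Combining this with a pigeonhole argument that locates a radius of small ``expansion'', I would peel the graph ball by ball, each time depositing into the output a large independent set harvested from a bipartite ball, and then charge the deposited vertices against the deleted ones.

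Concretely, I would fix $k := \lfloor (1/\epsilon - 1)/2 \rfloor$, so that $2k+1 \le 1/\epsilon$ and hence (for $\epsilon$ below a suitable $\epsilon_0$) $G$ has no odd cycle of length $\le 2k+1$. The first step is the observation that for every $v \in V$ and every $r \le k$, the subgraph $G[B(v,r)]$ on the ball of radius $r$ about $v$ is bipartite, the two sides being the vertices at even, respectively odd, distance from $v$: an edge joining $u$ at distance $a \le r$ and $w$ at distance $b \le r$ from $v$ would, together with shortest $v$--$u$ and $v$--$w$ paths, give a closed walk of length $a+b+1 \le 2k+1$, which is odd whenever $a+b$ is even, and an odd closed walk contains an odd cycle of no greater length --- contradicting the hypothesis. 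Consequently the larger side of $B(v,r)$ is an independent set of size at least $|B(v,r)|/2$, and it is found by BFS. (Equivalently, each sphere $\{u : d(v,u)=r\}$ with $r \le k$ is itself independent.)

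The algorithm I have in mind is then: while the current graph is nonempty, pick any vertex $v$; since $|B(v,0)| = 1$ and $|B(v,k)| \le n$, a telescoping/pigeonhole argument on the ratios $|B(v,r+1)|/|B(v,r)|$ (whose product over $r = 0,\dots,k-1$ is $|B(v,k)| \le n$) produces an index $i = i(v) \le k-1$ with $|B(v,i+1)| \le n^{1/k}\,|B(v,i)|$; add the larger bipartition side of $G[B(v,i)]$ to the output set $I$, and delete the whole ball $B(v,i+1)$ from the graph. Each round runs in polynomial time and removes at least one vertex, so the process terminates in at most $n$ rounds. For correctness: a vertex placed in $I$ during some round lies in $B(v,i)$, so all of its neighbours lie in $B(v,i+1)$ and are deleted that round; hence it is non-adjacent to everything added in later rounds, and within a round we took one side of a bipartition, so $I$ is independent. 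For the size bound, let $R_t$ denote the ball $B(v,i(v)+1)$ deleted in round $t$; the sets $R_t$ partition $V$, and in round $t$ at least $|B(v,i(v))|/2 \ge |R_t|/(2 n^{1/k})$ vertices are added, whence $|I| \ge n/(2 n^{1/k}) = \tfrac12 n^{1-1/k}$. Since $k = \Theta(1/\epsilon)$ this is $n^{1-O(\epsilon)}$; taking $\epsilon_0$ small and, if one wants the exponent exactly $1-2\epsilon$, refining the choices of $k$ and of the peeled sets as in \cite{RS85ramsey}, gives the claimed bound $|V|^{1-2\epsilon}$.

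The hard part will be twofold. First, one must peel the ball of radius $i(v)+1$, one more than the ball whose larger side is harvested: this is precisely what guarantees that the harvested set has no neighbour outside the deleted region, so that contributions from different rounds are mutually independent --- harvesting from $B(v,i(v)+1)$ itself would break this. Second is the quantitative bookkeeping: the factor $n^{1/k}$ lost at each peel has to be balanced against the odd-girth budget $2k+1 \le 1/\epsilon$, and squeezing the constant in the exponent down to exactly $2\epsilon$ (rather than merely $\Theta(\epsilon)$, and removing the spurious factor $\tfrac12$) is the step that requires the sharper counting of \cite{RS85ramsey}. Everything else --- BFS, computing $i(v)$, extracting the bipartition --- is elementary and polynomial time, which also yields the algorithmic part of the statement.
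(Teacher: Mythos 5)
The paper does not actually prove this lemma: it is imported wholesale from \cite{RS85ramsey}, and only the derived Corollary \ref{corr:coloring} is proved in the text. So there is no in-paper argument to compare against; judged on its own, your reconstruction is the standard ball-peeling proof of this Ramsey-type bound and its structure is sound: local bipartiteness of $B(v,r)$ for $r\le k$ via the odd-closed-walk argument, pigeonhole on the growth ratios, harvesting one side of $B(v,i)$ while deleting $B(v,i+1)$ so that rounds are mutually independent, and charging the harvest against the deleted blocks. The only real shortfall is the quantitative one you yourself flag, and it is worth seeing that it does not require any new idea from the source: since $2k+1\le 1/\epsilon$ forces $k<1/(2\epsilon)$, your exponent $1-1/k$ genuinely undershoots $1-2\epsilon$ for every $n$. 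The fix is to run the telescoping one radius further, over $r=0,\dots,k$, so that the product of ratios equals $|B(v,k+1)|\le n$ and some $i\le k$ satisfies $|B(v,i+1)|\le n^{1/(k+1)}|B(v,i)|$. Bipartiteness is needed only for the \emph{harvested} ball $B(v,i)$ (radius $\le k$), not for the \emph{deleted} ball $B(v,i+1)$ (radius $\le k+1$), so nothing else changes, and you obtain $|I|\ge \tfrac12 n^{1-1/(k+1)}$ with $1/(k+1)<2\epsilon$; the surplus $n^{2\epsilon-1/(k+1)}=n^{\Theta(\epsilon^2)}$ then absorbs the factor $\tfrac12$ once $n$ exceeds a threshold depending on $\epsilon$, which is the regime in which the lemma is applied (small $n$ being trivial or a matter of the source's exact bookkeeping). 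With that one-radius adjustment your argument is complete and also yields the polynomial-time algorithm, since every step is BFS plus comparing ball sizes.
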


Consequently, given a graph without short odd cycles, one can color it efficiently using a small number of colors, as stated in the following corollary.
\begin{corollary}					\label{corr:coloring}
	There exists a constant $\epsilon_0 \in (0,1)$ for which the following holds. Given a graph $G = (V,E)$ which does not contain odd cycles of length at most $1/\epsilon$ where $\epsilon < \epsilon_0$, there exists a polynomial time algorithm which can compute a coloring of $G$ using $\tilde{O}(n^{2\epsilon})$ colors.
\end{corollary}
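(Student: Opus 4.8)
The plan is to bootstrap Lemma~\ref{lem:ramsey} recursively. Given $G=(V,E)$ with $n$ vertices and no odd cycle of length at most $1/\epsilon$, I would first invoke Lemma~\ref{lem:ramsey} to extract, in polynomial time, an independent set $I$ with $|I| \ge n^{1-2\epsilon}$. Assign all of $I$ a fresh color, delete $I$ from $G$, and repeat on the induced subgraph $G[V \setminus I]$. The crucial observation is that $G[V \setminus I]$ is an induced subgraph of $G$, so it inherits the property of having no odd cycle of length at most $1/\epsilon$, and hence the hypothesis of Lemma~\ref{lem:ramsey} is preserved at every stage. This gives a valid coloring: each color class is an independent set of $G$ by construction.

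Next I would bound the number of colors, which is the number of iterations. Let $n_t$ be the number of remaining vertices after $t$ rounds, with $n_0 = n$. Each round removes at least $n_t^{1-2\epsilon}$ vertices, so $n_{t+1} \le n_t - n_t^{1-2\epsilon}$. A standard estimate for such a recursion shows that the number of rounds needed to bring the count from $n$ down to $0$ (or to a constant) is $O(n^{2\epsilon})$: roughly, while $n_t$ stays within a constant factor of some value $m$, each round removes $\Omega(m^{1-2\epsilon})$ vertices, so it takes $O(m^{2\epsilon})$ rounds to halve the count, and summing the geometric series $\sum_{j} (n/2^j)^{2\epsilon}$ over the $O(\log n)$ halving phases gives $O(n^{2\epsilon})$ total, with the $\log n$ factor absorbed (or contributing a polylog factor) into the $\tilde{O}$. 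When the remaining graph has only a constant number of vertices we finish it off with that many extra colors, which is negligible.

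For the running time, each iteration runs the polynomial-time algorithm of Lemma~\ref{lem:ramsey} once on a graph of at most $n$ vertices, and there are $\tilde{O}(n^{2\epsilon}) = \mathrm{poly}(n)$ iterations, so the overall algorithm is polynomial time. The constant $\epsilon_0$ is taken to be the same $\epsilon_0$ from Lemma~\ref{lem:ramsey}, so that the lemma applies throughout.

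I do not anticipate a genuine obstacle here; the only point requiring a little care is the recursion estimate bounding the iteration count by $\tilde{O}(n^{2\epsilon})$ rather than something larger. The cleanest way to make this rigorous is the phase argument sketched above: partition the execution into phases indexed by $j = 0, 1, \ldots, O(\log n)$, where phase $j$ consists of the rounds during which $n_t \in (n/2^{j+1}, n/2^j]$; within phase $j$ each round deletes at least $(n/2^{j+1})^{1-2\epsilon}$ vertices while at most $n/2^j$ vertices need to be deleted, so the phase lasts at most $2 (n/2^j)^{2\epsilon}$ rounds, and $\sum_{j \ge 0} 2(n/2^j)^{2\epsilon} = 2 n^{2\epsilon} \sum_{j\ge 0} 2^{-2\epsilon j} = O(n^{2\epsilon})$ since $\epsilon > 0$ makes the series converge. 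This yields the claimed $\tilde{O}(n^{2\epsilon})$ bound on the number of colors.
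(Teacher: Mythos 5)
Your proposal is correct and follows essentially the same route as the paper: repeatedly apply Lemma~\ref{lem:ramsey} to peel off independent sets of size at least $n_t^{1-2\epsilon}$, use each as a color class, and bound the number of iterations by a halving/phase argument. The paper's own analysis makes the slightly cruder claim that every $n^{2\epsilon}$ iterations remove a constant fraction of the remaining vertices (giving $\tilde{O}(n^{2\epsilon})$ with an extra $\log n$), whereas your geometric-series bookkeeping gives the marginally cleaner $O(n^{2\epsilon})$ count, but the underlying idea is identical.
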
 
Establishing the above corollary using Lemma \ref{lem:ramsey} is straightforward, and just uses the fact that one can keep removing large independent sets in the graph using Lemma \ref{lem:ramsey}, and recurse on the remaining vertices. For the sake of completeness, we include the proof here.

\begin{proof}

\begin{algorithm}						
	\label{alg:ind-set}
	\SetAlgoLined
	\KwIn{Graph $G = (V,E)$}
	Initialize $t \gets 1$ and $G_1 \gets G$\;
	\While{$G_t \neq \phi$}{
		Let $I_t$ be the independent set from Lemma \ref{lem:ramsey} instantiated with $G_t$\;
		Set $G_{t+1} \gets G_t \setminus I_t$\;
		Update $t \gets t + 1$\;  
	}
	
	Output coloring $I_1 \uplus I_2 \uplus \cdots \uplus I_t$\;
	\caption{IndSetColoring}
\end{algorithm}

Consider Algorithm IndSetColoring for coloring by iteratively finding large independent sets. Here, we use Lemma \ref{lem:ramsey} to iteratively remove independent sets $I_1,I_2,\ldots,I_t$, where each independent set forms a color class. Let $G_t = G[V \setminus (I_1 \cup I_2 \cup \cdots I_t)]$ denote the graph on the surviving vertices after $t$ iterations. We claim that in every $T = n^{2\epsilon}$ applications of Lemma \ref{lem:ramsey} at least a constant fraction of vertices are removed, i.e., for any iteration $t$, we have $|{\rm Vert}(G_{t + T})| \leq ( 1- 1/2^{1-2\epsilon})|{\rm Vert}(G_t)|$.

This can be shown as follows. Let $n_t = |{\rm Vert}(G_t)|$ denote the number of vertices in graph $G_t$. Then, we can assume that $|{\rm vert}(G_{t + T})| > n_t/2$ (otherwise we are done). Then, in $T$ iterations the number of vertices removed can be lower bounded by 
\begin{equation}
\sum_{j = 1}^{T}|I_{j+T}|\geq \sum_{j = 1}^{T}|{\rm Vert}(G_{t + j})|^{1 - 2\epsilon} \geq n^{2\epsilon}(n_t/2)^{1 - 2\epsilon} \ge n_t/2^{(1 - 2\epsilon)}
\end{equation}
where the first inequality follows from the guarantee of Lemma \ref{lem:ramsey}. Therefore, in $\tilde{O}(n^{2\epsilon}) $ iterations, all the vertices will be accounted for.

\end{proof}

\section{Approximation algorithm for General Setting}

In this section, we prove our approximation guarantees in the adversarial model, as formally stated in the following theorem:

\ColMain*
	
	\begin{algorithm}[h!]					
		\SetAlgoLined
		Set $\Delta = n^{3/4}$\;
		Solve the Partial-$3$-Coloring SDP (SDP-P$3$C): 
			\begin{alignat*}{4}
					\mbox{minimize }  &  \sum_{i\in V} w_i  \\
					\mbox{subject to } &  \langle v_i, v_j \rangle \leq -\frac{1}{2}+\frac{3}{2}z_{ij} &\qquad & \forall \{i,j\}\in E \\
					& z_{ij} \leq w_i+w_j &\qquad &  \forall \{i,j\}\in E \\
					& 0 \le z_{ij}\leq 1 &\qquad & \forall \{i,j\}\in E \\
					& 0 \le w_{i} \leq 1 &\qquad & \forall i\in V \\
					& \|v_i\|^2=1 &\qquad & \forall  i \in V
			\end{alignat*}\\
			\nonl
		{\it(i) Thresholding}: \\
		Let $S \gets \left\{i \in V| w_i \ge \gamma/3 \right\}$\;
		Let $G' \gets G[V \setminus S]$ be the graph obtained after deleting $S$\;
		\nonl{\it(ii) Coloring Large Degree vertices}:\\
		\While{$\exists i \in G'$ such that ${\rm deg}_{G'}(i) \ge \Delta$}
		{
			Color $G'[\{i\} \cup N_{G'}(i)]$ using $\tilde{O}(n^{C\sqrt{\gamma}})$ colors using the algorithm guaranteed by Corollary \ref{corr:coloring}\;
			Remove $\{i\} \cup N_{G'}(i)$ from $G'$\;
		}
		\nonl{\it(iii) Coloring Low Degree vertices}: \\
	Use {\em randomized rounding} from Theorem \ref{thm:appx-col} to color the remaining vertices in $G'$\;
	\caption{Partial-$3$-Coloring}    \label{alg:3col-main}
	\end{algorithm}

The algorithm for the above theorem is described in Algorithm \ref{alg:3col-main}. In the following subsections, we prove the correctness of the above algorithm. The proof of Theorem \ref{thm:3col-main} can broken down into the analysis of steps (i),(ii) and (iii) of the Partial-$3$-Coloring algorithm. Broadly, we show the following: In step (i), we show that the optimal of the SDP-P$3$C is small (i.e., at most $\epsilon n$), therefore by averaging, the fraction of large $w$ vertices is small. Furthermore, the graph induced on the surviving vertices must satisfy the edge constraints from the SDP with small slack $\gamma$, and therefore must be approximately vector $3$-colorable. As is usual in coloring algorithms, we first iteratively color large degree (i.e., $\ge \Delta$) vertices and their neighborhoods using small number of colors until the graph has degree bounded by $\Delta$ (Claim \ref{cl:step(ii)}). Finally, the remaining graph is also approximately vector $3$-colorable, and has degree bounded by $\Delta$. Therefore, using a hyperplane based randomized rounding procedure to iteratively find large independent sets in $G'$, we can give a $\tilde{O}(\Delta^{1/3 + O(\sqrt{\gamma})})$ coloring of the remaining vertices (Theorem \ref{thm:appx-col}). In the following subsection, we formally prove the steps described above. 

To begin with, we first show that the thresholding step throws away at most a small fraction of vertices.

\begin{claim}[Removing Large Slack Vertices]				\label{cl:step(i)}
	Let $S \subset V$ be as constructed in the thresholding step. Then $|S| \le 3\epsilon n/\gamma$.
\end{claim}
\begin{proof}
	
	We begin by showing that the optimal of SDP-P$3$C is at most $\epsilon n$. Let $V = V_{\rm good} \cup V_{\rm bad}$ be any partition of the vertex sets into good and bad vertices such that (a) $G[V_{\rm good}]$ is $3$-colorable and (b) $|V_{\rm bad}| \le \epsilon n$. Using this partition we now construct a $2$-dimensional feasible solution $(\widehat{v},\widehat{w},\widehat{z})$ to SDP-P$3$C as follows. We set the $\widehat{w}_i$ and $\widehat{z}_{ij}$ variables as  
	\[
	\widehat{w}_{i}= 
	\begin{cases}
	0,  & \text{if } i \in V_{\rm good}\\
	1,    & \text{otherwise}
	\end{cases}
	\qquad\mbox{ and }\qquad \widehat{z}_{ij}= 
	\begin{cases}
	0,  & \text{if } i,j \in V_{\rm good}\\
	1,    & \text{otherwise}
	\end{cases}
	\]
	Furthermore, we set $\{\widehat{v}_i\}_{i \in V_{\rm good}}$ be a vector $3$-coloring of $G[V_{\rm good}]$, and for every $i \in V_{\rm bad}$ we set $\widehat{v}_i = [1 \quad 0]$. We quickly verify that the $\widehat{v},\widehat{w}$ and the $\widehat{z}$ variables constructed as above form a feasible solution to the SDP. By construction, for every $i \in V$ we have $\widehat{w}_i \in [0,1]$ and $\|\widehat{v}_i\|^2 = 1$, and for every edge $(i,j) \in E$ we have $z_{ij} \in [0,1]$. Furthermore, for any edge $(i,j)$ we also have 
	\begin{equation*}
		\widehat{z}_{ij} = \mathbbm{1}\Big(\left\{i \in V_{\rm bad}\right\} \vee \left\{j \in V_{\rm bad}\right\} \Big) \leq \mathbbm{1}\big(\left\{i \in V_{\rm bad}\right\} \big) + \mathbbm{1}\big(\left\{j \in V_{\rm bad}\right\} \big) = \widehat{w}_i + \widehat{w}_j
	\end{equation*}
	All that remains to verify is that the variables also satisfy the approximate vector coloring constraints. We look at two cases: if $i,j \in V_{\rm good}$, then $\widehat{v}_i,\widehat{v}_j$ come from the vector $3$-coloring of $G[V_{\rm good}]$ and therefore they satisfy $\langle \widehat{v}_i, \widehat{v}_j \rangle \leq -\frac12 \leq -\frac12 + \widehat{z}_{ij}$. On the other hand if $i \in V_{\rm bad}$ or $j \in V_{\rm bad}$ then by construction we have $\widehat{z}_{ij} = 1$, and therefore $\langle \widehat{v_i},\widehat{v}_j \rangle \le \|\widehat{v}_i\|\|\widehat{v}_j\| = 1 = -\frac12 + \frac32\widehat{z}_{ij}$.
	
	Therefore, we have established that $(\widehat{z},\widehat{w},\widehat{v})$ are a feasible solution for SDP-P$3$C. Since by construction $\widehat{w}_i = \mathbbm{1}\left\{i \in V_{\rm good} \right\}$, and the $|V_{\rm bad}| \le \epsilon n$, it follows that the SDP optimal $\sum_{i \in V} w_i$ is at most $ \sum_{i \in V} \widehat{w}_i \le \epsilon n$. Therefore, using Markov's inequality, we get
	\begin{equation*}
		|S| = n\cdot \Pr_{i \sim V}\Big[w_i \ge \gamma/3\Big] \le n\cdot\frac{3\sum_{i \in V} w_i}{n\gamma} = \frac{3\epsilon n}{\gamma}
	\end{equation*}
	
\end{proof}

From the above claim, the graph $G' = G[V \setminus S]$ induced on the remaining vertices satisfies the following properties:

\begin{itemize}
	\item[1.] The graph $G'$ contains at least $(1-3\epsilon/\gamma)n$ vertices.
	\item[2.] The graph $G'$ is $(3,\gamma)$-vector colorable. In particular, the vectors $(v_i)_{i \in V \setminus S}$ themselves are a $(3,\gamma)$-vector coloring of $G'$.
\end{itemize}

The second point shall be used crucially in the analysis of the remaining two steps. The next claim bounds the number of colors used while coloring the large degree vertices in step (ii).

\begin{claim}[Degree Reduction]			\label{cl:step(ii)}
	In step (ii), over all the iterations of the while loop, the algorithm uses at most $(n/\Delta)\tilde{O}\left(n^{C\sqrt{\gamma}}\right)$ colors, where $C> 0$ is a constant.
\end{claim}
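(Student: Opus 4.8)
The plan is to bound the number of colors by bounding the number of iterations of the while loop, and multiplying by the per-iteration color count. First I would observe that each iteration of the loop picks a vertex $i$ with $\deg_{G'}(i) \geq \Delta$ and removes the set $\{i\} \cup N_{G'}(i)$, which has size at least $\Delta + 1 > \Delta$. Since the vertices removed across distinct iterations are disjoint (each is deleted from $G'$ immediately), and $G'$ has at most $n$ vertices to begin with, the loop can run at most $n/\Delta$ times. This is the easy counting part.

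The substantive point is that each iteration only costs $\tilde{O}(n^{C\sqrt{\gamma}})$ colors, which requires showing the call to Corollary \ref{corr:coloring} is legitimate — i.e., that $G'[\{i\} \cup N_{G'}(i)]$ has no short odd cycles. Here I would chain together the structural lemmas: by property 2 established after Claim \ref{cl:step(i)}, $G'$ is $(3,\gamma)$-vector colorable; by Lemma \ref{lem:appx-2-col(a)}, the subgraph induced on $N_{G'}(i)$ is $(2,4\gamma)$-vector colorable; and by Lemma \ref{lem:appx-2-col(b)} (applicable since $4\gamma \leq 1/16$ as $\gamma \leq 1/100$), this neighborhood graph has no odd cycle of length at most $1/(8\sqrt{4\gamma}) = 1/(16\sqrt{\gamma})$. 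Adding the vertex $i$ back can only create odd cycles through $i$; any such cycle consists of $i$ together with an even-length path in $N_{G'}(i)$, but since $G'[N_{G'}(i)]$ is $(2,4\gamma)$-vector colorable it is in particular triangle-free and more generally short-odd-cycle-free, and one checks that no short odd cycle through $i$ can exist either — more directly, $G'[\{i\}\cup N_{G'}(i)]$ is itself $(3,\gamma)$-vector colorable (as an induced subgraph of $G'$), hence by the same two lemmas applied once more, or by a direct argument, it too has no odd cycles of length at most $\Omega(1/\sqrt{\gamma})$. Then Corollary \ref{corr:coloring} with $\epsilon' = \Theta(\sqrt{\gamma})$ colors it with $\tilde{O}(n^{2\epsilon'}) = \tilde{O}(n^{C\sqrt{\gamma}})$ colors for an appropriate constant $C$, provided $\sqrt{\gamma}$ is below the threshold $\epsilon_0$, which holds since $\gamma \leq 1/100$.

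Multiplying the two bounds gives a total of at most $(n/\Delta)\cdot \tilde{O}(n^{C\sqrt{\gamma}})$ colors over all iterations, as claimed. I expect the main obstacle to be the careful handling of whether adding $i$ back into its neighborhood can create a short odd cycle; the cleanest route is to avoid Lemma \ref{lem:appx-2-col(a)} entirely for this purpose and instead note that $G'[\{i\}\cup N_{G'}(i)]$ is an induced subgraph of the $(3,\gamma)$-vector colorable graph $G'$ and hence $(3,\gamma)$-vector colorable, so I should think about whether I want the short-odd-cycle-freeness of a $(3,\gamma)$-vector colorable graph directly — but $(3,\gamma)$-vector colorable graphs can contain triangles, so in fact one does need to be slightly careful: a triangle through $i$ would require an edge inside $N_{G'}(i)$, which is ruled out since $G'[N_{G'}(i)]$ is $(2,4\gamma)$-vector colorable and therefore (being even $2$-vector colorable up to small slack) has no edges forming odd cycles of length $3$; more generally any odd closed walk through $i$ of length $2\ell+1$ yields an odd closed walk of length $2\ell-1$ inside $N_{G'}(i)$ by shortcutting through $i$, so short-odd-cycle-freeness transfers, and the constant in Corollary \ref{corr:coloring} absorbs the factor loss. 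This transfer argument is the one place requiring genuine care; everything else is bookkeeping.
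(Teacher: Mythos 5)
Your overall route is the paper's route: bound the number of iterations by $n/\Delta$ (each iteration deletes the disjoint set $\{i\}\cup N_{G'}(i)$ of size at least $\Delta+1$), and bound the per-iteration cost by chaining Lemma \ref{lem:appx-2-col(a)} (the neighborhood is $(2,4\gamma)$-vector colorable), Lemma \ref{lem:appx-2-col(b)} (hence free of odd cycles of length $O(1/\sqrt{\gamma})$), and Corollary \ref{corr:coloring}. However, the extra step you add --- arguing that short-odd-cycle-freeness ``transfers'' from $G'[N_{G'}(i)]$ to $G'[\{i\}\cup N_{G'}(i)]$ --- is false, and your justification of it contains two errors. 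First, $(2,4\gamma)$-vector colorability of $G'[N_{G'}(i)]$ does \emph{not} rule out edges inside $N_{G'}(i)$; it only forces the projected vectors on such an edge to be nearly antipodal. So a triangle $\{i,u,v\}$ with $u,v\in N_{G'}(i)$ and $\{u,v\}\in E$ is entirely consistent with $G'$ being $(3,\gamma)$-vector colorable, and $G'[\{i\}\cup N_{G'}(i)]$ can indeed contain short odd cycles through $i$. Second, the shortcutting argument does not work: deleting $i$ from an odd cycle of length $2\ell+1$ through $i$ leaves an odd \emph{path} of length $2\ell-1$ between two vertices of $N_{G'}(i)$, not an odd closed walk inside $N_{G'}(i)$, so no contradiction with the neighborhood being short-odd-cycle-free is obtained.

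The good news is that this strengthening is unnecessary, and dropping it recovers the paper's argument. Apply Corollary \ref{corr:coloring} only to $G'[N_{G'}(i)]$, which is legitimately short-odd-cycle-free by the two lemmas (with $4\gamma\le 1/16$ since $\gamma\le 1/100$), and color the center $i$ with one additional fresh color --- it is adjacent to every vertex of $N_{G'}(i)$, so it must get its own class in this iteration anyway. This costs $\tilde{O}(n^{C\sqrt{\gamma}})+1=\tilde{O}(n^{C\sqrt{\gamma}})$ colors per iteration, and combined with your (correct) bound of at most $n/\Delta$ iterations it yields the claimed $(n/\Delta)\cdot\tilde{O}(n^{C\sqrt{\gamma}})$ total. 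With that one repair your proof matches the paper's.
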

\begin{proof}
	Fix any vertex $i \in G'$, and let $\tilde{G}_i = G'[N(i)]$ the graph induced on the neighborhood of vertex $i$. Since the graph $G'$ is $(3,\gamma)$-vector colorable, using Lemma \ref{lem:appx-2-col(a)} we know that $\tilde{G}_i$ is $(2, 4\gamma)$-vector colorable. Furthermore, from Lemma \ref{lem:appx-2-col(b)}, we know that $G'$ does not contain odd cycles of length at most $1/(8\sqrt{4\gamma})$. Therefore, we can use Corollary \ref{corr:coloring} to obtain a $\tilde{O}(n^{C\sqrt{\gamma}})$ coloring of $\tilde{G}_i \cup \{i\}$. Finally, note that each iteration of the for loop removes and colors at least $\Delta + 1$ vertices of the graph. Therefore, the total number of iterations of the for loop is bounded by $n/\Delta$. Since in each such iteration we can color the vertex and its neighborhood using $n^{C\sqrt{\gamma}}$ number of colors, the claim follows.
\end{proof}

After steps $(i)$ and $(ii)$, we are left with the graph $G'=(V',E')$ which is $(3,\gamma)$-vector colorable graph and has degree at most $\Delta$. In particular, for every edge $(i,j) \in E'$, the corresponding vectors satisfy $\langle v_i , v_j \rangle \le -\frac12 + \gamma$. Since the independent set based rounding technique~\cite{KMS98}~\cite{AC06} for coloring vector $3$-colorable graphs is \emph{robust}, we can still use it to round the vector coloring of approximately $3$-colorable graphs with similar guarantees, as formally stated in the following theorem.

\begin{theorem}						\label{thm:appx-col}
	Let $G = (V,E)$ be a graph with maximum degree $\Delta$ which is $(3,\alpha)$-vector colorable. Then there exists an efficient randomized algorithm that can color it using $O\left((\ln \Delta)^{1 / 2} \Delta^{\frac{\frac{3}{4}+\alpha-{{\alpha}^2}}{(\frac{3}{2}-\alpha)^2}  }\ln{n}\right)$ colors. 
	
	In particular, if $\alpha \le 1/10$, then the algorithm uses at most $\tilde{O}\left((\ln \Delta)^{1 / 2} \Delta^{{\frac{1}{3}+10\alpha}} \right)$, where $\tilde{O}$ hides polylogarithmic factors in $n$.
\end{theorem}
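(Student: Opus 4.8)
The plan is to use the semidefinite-programming independent-set rounding of Karger, Motwani and Sudan~\cite{KMS98} --- in the sharpened form that yields the $\Delta^{1/3}$ bound for vector $3$-colorable bounded-degree graphs --- and to observe that its analysis degrades gracefully when the edge inner products are only $\le -1/2+\alpha$ rather than $\le -1/2$. Given a $(3,\alpha)$-vector coloring $v_1,\dots,v_n\in\mathbb{R}^d$ (which can be computed in polynomial time by solving the vector-coloring SDP feasibility problem, up to a negligible additive loss in $\alpha$), the algorithm repeatedly extracts a large independent set of the current graph, assigns it a fresh color, deletes it, and recurses; the total number of colors is then $O(\ln n)$ times the reciprocal of the fraction of vertices removed per round, exactly as in the proof of Corollary~\ref{corr:coloring}.

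To extract one independent set from a subgraph $H$ of $G$ on $n_H$ vertices, sample a standard Gaussian $r\sim N(0,I_d)$, fix a threshold $\lambda>0$ to be chosen, put $S_r=\{i\in{\rm vert}(H):\langle v_i,r\rangle\ge\lambda\}$, and let $I_r$ be obtained from $S_r$ by deleting one endpoint of each edge of $H$ lying inside $S_r$; then $I_r$ is independent in $H$ and $|I_r|\ge|S_r|-(\text{number of edges of }H\text{ inside }S_r)$. Since $\langle v_i,r\rangle\sim N(0,1)$, each vertex of $H$ lies in $S_r$ with probability $\bar\Phi(\lambda)$, where $\bar\Phi(t):=\Pr[N(0,1)\ge t]$ is the Gaussian tail function, and since $H$ has at most $n_H\Delta/2$ edges we get $\mathbb{E}|I_r|\ge n_H\bar\Phi(\lambda)-\tfrac{n_H\Delta}{2}\max_{\{i,j\}\in E(H)}\Pr[\langle v_i,r\rangle\ge\lambda,\ \langle v_j,r\rangle\ge\lambda]$. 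The key estimate is a bivariate Gaussian tail bound: with $\rho=\langle v_i,v_j\rangle<0$, write $\langle v_j,r\rangle=\rho\langle v_i,r\rangle+\sqrt{1-\rho^2}\,Z$ where $Z\sim N(0,1)$ is independent of $\langle v_i,r\rangle$; conditioned on $\langle v_i,r\rangle\ge\lambda$, achieving $\langle v_j,r\rangle\ge\lambda$ forces $Z\ge\lambda(1-\rho)/\sqrt{1-\rho^2}=\lambda\sqrt{(1-\rho)/(1+\rho)}$, so $\Pr[\langle v_i,r\rangle\ge\lambda,\ \langle v_j,r\rangle\ge\lambda]\le\bar\Phi(\lambda)\,\bar\Phi\!\big(\lambda\sqrt{(1-\rho)/(1+\rho)}\big)$. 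This bound increases with $\rho$, so since each edge has $\rho\le-1/2+\alpha$ it is at most $\bar\Phi(\lambda)\,\bar\Phi\!\big(\lambda\sqrt{(3/2-\alpha)/(1/2+\alpha)}\big)$.

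Now set $\lambda^2=\frac{2(1/2+\alpha)}{3/2-\alpha}\ln\Delta$, so that $\lambda\sqrt{(3/2-\alpha)/(1/2+\alpha)}=\sqrt{2\ln\Delta}$; using $\bar\Phi(x)\le\tfrac12 e^{-x^2/2}$ the negative term is then at most $\tfrac{n_H\Delta}{2}\cdot\bar\Phi(\lambda)\cdot\tfrac{1}{2\Delta}=\tfrac14 n_H\bar\Phi(\lambda)$, hence $\mathbb{E}|I_r|\ge\tfrac34 n_H\bar\Phi(\lambda)$. Because $\lambda=\Theta(\sqrt{\ln\Delta})$, the tail lower bound $\bar\Phi(\lambda)\ge\frac{c}{\lambda}e^{-\lambda^2/2}$ combined with $\lambda^2/2=\frac{1/2+\alpha}{3/2-\alpha}\ln\Delta$ gives $\mathbb{E}|I_r|=\Omega\big(n_H(\ln\Delta)^{-1/2}\Delta^{-(1/2+\alpha)/(3/2-\alpha)}\big)$. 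Since $|I_r|\le n_H$ always, a single draw attains at least half this value with probability $\Omega\big((\ln\Delta)^{-1/2}\Delta^{-(1/2+\alpha)/(3/2-\alpha)}\big)$, so repeating polynomially many times produces such an independent set with high probability; iterating extract-color-delete as in Corollary~\ref{corr:coloring} then colors $G$ with $O\big((\ln\Delta)^{1/2}\Delta^{(1/2+\alpha)/(3/2-\alpha)}\ln n\big)$ colors in all. Finally, $\frac{1/2+\alpha}{3/2-\alpha}=\frac{(1/2+\alpha)(3/2-\alpha)}{(3/2-\alpha)^2}=\frac{3/4+\alpha-\alpha^2}{(3/2-\alpha)^2}$, which is exactly the exponent claimed; and for $\alpha\le1/10$ one has $\frac{1/2+\alpha}{3/2-\alpha}-\frac13=\frac{8\alpha}{3(3-2\alpha)}\le\alpha\le10\alpha$, giving the stated simplified bound $\tilde{O}\big((\ln\Delta)^{1/2}\Delta^{1/3+10\alpha}\big)$.

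The main obstacle is the bivariate Gaussian tail estimate and the accompanying bookkeeping that verifies the exponent is exactly $\frac{3/4+\alpha-\alpha^2}{(3/2-\alpha)^2}$; the rest --- the independent-set deletion argument, the choice of $\lambda$, the averaging step that turns the expectation into a high-probability guarantee, and the recursion converting per-round density into a color count --- is routine and follows the template of the (exact) vector-coloring analysis in~\cite{KMS98} and of the proof of Corollary~\ref{corr:coloring}.
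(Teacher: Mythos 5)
Your proposal is correct and follows essentially the same route as the paper's Appendix~\ref{sec:3col-indset} proof: KMS-style Gaussian hyperplane threshold rounding of the $(3,\alpha)$-vector coloring with threshold $\lambda=\sqrt{2\ln\Delta}\,\sqrt{\tfrac34+\alpha-\alpha^2}/(\tfrac32-\alpha)$, yielding independent sets of expected density $\Omega\bigl((\ln\Delta)^{-1/2}\Delta^{-(\frac34+\alpha-\alpha^2)/(\frac32-\alpha)^2}\bigr)$, boosted to a high-probability guarantee by repetition and then iterated to produce the color classes. The only (immaterial) difference is the extraction step --- you delete one endpoint of each edge inside $S_r$ and bound the expected number of such edges via a bivariate Gaussian tail estimate, whereas the paper keeps only the vertices of $S(\beta)$ with no neighbor in $S(\beta)$ and union-bounds the conditional probability over the at most $\Delta$ neighbors --- and your observation that the exponent simplifies to $(\tfrac12+\alpha)/(\tfrac32-\alpha)$ agrees exactly with the paper's stated exponent.
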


The proof of the above theorem is an extension of the proofs from \cite{KMS98,AC06} to the setting of approximately vector $3$-colorable graphs. We defer the proof to Appendix \ref{sec:3col-indset}. Instantiating the above theorem with $G = G'$ and $\alpha = \gamma$, we get that $G'$ is colored using $\tilde{O}(\Delta^{1/3 + 10\gamma})$ colors. Overall, the algorithm throws away at most $3 \epsilon/\gamma$ fraction of vertices in step (i). Furthermore, it uses a total of $\tilde{O}\left((n/\Delta)n^{O(\sqrt{\gamma})} + \Delta^{1/3 + 10\gamma}\right)$ colors in steps (ii) and (iii) respectively. Setting $\Delta = n^{3/4}$ in the previous expression, we get that the algorithm uses at most $\tilde{O}(n^{1/4 + O(\sqrt{\gamma})})$ colors. This concludes the analysis of the Partial-$3$-Coloring algorithm and the proof of Theorem \ref{thm:3col-main}.

\section{Algorithm for Semi-random instances}

In this section, we prove Theorem \ref{thm:3col-random}, which we again state here for convenience. 

\ColRandom*

We begin by describing the algorithm for the semi-random setting:

		\begin{algorithm}
			\SetAlgoLined
			Let $\mathcal{A}$ be the algorithm which can color $3$-colorable graphs using $n^\theta$ colors\;
			Set $\delta = \theta/10$\;
										
			\nonl\texttt{\\}		
			\nonl\{{\it Many short odd cycles}\}: \\
			{	
				\For{every vertex $v \in V$}
				{
				Let $G_v := G[N_{G}(v)]$ the subgraph induced by the neighborhood of $G$\;
				Greedily construct a maximal set $\mathcal{C}_v$ of vertex disjoint odd cycles of length at most $1/\delta$ in $G_v$\;	
				}
				Construct set $S \gets \{v \in V : |\mathcal{C}_v| \ge 2\epsilon n \}$\;
				Let $G_0 \gets G[V \setminus S]$ be the graph obtained after deleting $S$\;
				Let $\sigma_1$ be the coloring of $V \setminus S$ obtained by running algorithm $\mathcal{A}$ on $G_0$. Let $L$ denote the number of colors used by the algorithm\;	
			}
			
			\nonl\texttt{\\}
			\nonl\{{\it Few short odd cycles}\}:		\\
			{ 	 	
				Compute a maximal set $\mathcal{C} = \left\{C_1,C_2,\ldots,C_m\right\}$ of vertex disjoint odd cycles in $G$ of length at most $1/\delta$ using greedy algorithm\;
				Let $V' = V \setminus \left( \bigcup_{i \in [m]} {\rm vert}(C_i)\right)$\;
				Use the algorithm guaranteed by Corollary \ref{corr:coloring} to give a $\tilde{O}\left({n^{2\delta}}\right)$ coloring $\sigma_2$ of $G[V']$\;
			}	
			
			\nonl\texttt{\\}
		    \nonl\{{\it Output best coloring}\}:		\\
			\If{$|S| \le \epsilon n$ and $L \le n^\theta$}
			{Output coloring $\sigma_1$ of $V \setminus S$}
			\Else
			{
				Output coloring $\sigma_2$ of $V'$\;
				}
			
		\caption{P$3$C-Random}
		\end{algorithm}

		The algorithm proceeds case wise depending on whether there exists many vertex disjoint short odd cycles in $G$. If it does, then since $V_{\rm bad}$ is small, $G[V_{\rm good}]$ must also contain many vertex disjoint odd cycles. We show that these short cycles will show up in the neighborhood of the bad vertices with high probability, which can be used to identify them. On removing these vertices, we will be left with a $3$-colorable graph. On the other hand, if the number of short odd cycles is small, we can remove them. The remaining graph will still contain most of the vertices and will be short odd cycle free. We can then use Lemma \ref{lem:ramsey} to recover large independent sets. Finally, since the odd cycles we consider are of length at most $1/\delta$, we can work with a \emph{maximal} set of vertex disjoint odd cycles, instead of the largest cardinality set of vertex disjoint odd cycles, while only losing a factor of $1/\delta$ in our analysis.

	\subsection{Correctness of the P$3$C-Random algorithm}
	
	Let $\mathcal{C}^* = \{C^*_1,C^*_2,\ldots,C^*_{m^*}\}$ be a {\em fixed largest cardinality set of vertex disjoint odd cycles} of length at most $1/\delta$ in $G[V_{\rm good }]$. {\em In particular, $\mathcal{C}^*$ and consequently $m^*$, does not depend on the realization of the random and adversarial edges (i.e., the $E_0$ and $E_1$ edges) between $V_{\rm good}$ and $V_{\rm bad}$}. We break our analysis into two cases depending on whether $m^*$ is small or large.

	{\bf Case (i)} $m^* > 4\epsilon n/(\delta p^{1/\delta})$ : For ease of exposition, we say that an odd cycle $C$ in graph $G$ is \emph{good} if it consists of only good vertices, otherwise we call it {\em bad}. The first claim shows the set $\mathcal{C}_v$ must be small for good vertices.
	
	\begin{claim}					\label{cl:3col-a}
		For every good vertex $v \in V$, we have $|\mathcal{C}_v| \le \epsilon n$.
	\end{claim}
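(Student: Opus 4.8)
The plan is to show that a good vertex $v$ cannot have too many vertex-disjoint short odd cycles in its neighborhood $G_v = G[N_G(v)]$, by exploiting the fact that $v$ together with such a cycle would create an object forbidden in a $3$-colorable graph. Concretely, if $v \in V_{\rm good}$ and $C$ is an odd cycle contained in $N_G(v)$, then every vertex of $C$ is adjacent to $v$; restricting a $3$-coloring of $G[V_{\rm good}]$ to $C$ uses only the two colors different from the color of $v$, so $C$ would be a properly $2$-colored odd cycle — a contradiction. Hence \emph{every} odd cycle inside $G_v$ must contain at least one bad vertex, which I would phrase as: the cycles in $\mathcal{C}_v$ are pairwise vertex-disjoint, so they consume pairwise distinct bad vertices, giving $|\mathcal{C}_v| \le |V_{\rm bad}| \le \epsilon n$.

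The key steps, in order, are: (1) fix an arbitrary $3$-coloring $\chi$ of $G[V_{\rm good}]$ (which exists by Definition \ref{def:pkc}); (2) observe that for $v \in V_{\rm good}$, any odd cycle $C \subseteq G_v$ with all vertices good would be properly colored by $\chi$ restricted to $C$ using only two colors (those in $\{1,2,3\}\setminus\{\chi(v)\}$), contradicting the fact that odd cycles are not $2$-colorable; (3) conclude that each cycle in $\mathcal{C}_v$ contains at least one bad vertex; (4) use vertex-disjointness of the cycles in $\mathcal{C}_v$ to get an injection from $\mathcal{C}_v$ into $V_{\rm bad}$, so $|\mathcal{C}_v| \le |V_{\rm bad}|$; (5) invoke $|V_{\rm bad}| \le \epsilon n$ from the hypothesis that $G$ is sampled from $(1-\epsilon)$-\pthrcr.

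I do not expect any real obstacle here — this is a short combinatorial observation. The one point to be a little careful about is that $\mathcal{C}_v$ is only a \emph{maximal} (greedily constructed) set of vertex-disjoint short odd cycles rather than a maximum one, but maximality is irrelevant for this direction: the bound $|\mathcal{C}_v| \le |V_{\rm bad}|$ holds for \emph{any} collection of vertex-disjoint odd cycles in $G_v$, each of which must contain a bad vertex. The only thing genuinely being used is vertex-disjointness together with the structural fact that a good vertex's neighborhood, restricted to good vertices, is $2$-colorable (indeed this is the $k=3$ case of the general ``neighborhoods lose a color'' principle). So the proof is essentially: each $C \in \mathcal{C}_v$ meets $V_{\rm bad}$; distinct $C$'s meet $V_{\rm bad}$ in distinct vertices; hence $|\mathcal{C}_v| \le |V_{\rm bad}| \le \epsilon n$.
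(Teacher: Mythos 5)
Your proposal is correct and follows essentially the same argument as the paper: show that an all-good odd cycle cannot lie in the neighborhood of a good vertex (the paper phrases this via 2-colorability of the neighborhood in the 3-colorable induced subgraph, you phrase it via the restricted coloring avoiding $\chi(v)$ — the same fact), so every cycle in $\mathcal{C}_v$ contains a bad vertex, and vertex-disjointness gives $|\mathcal{C}_v|\le|V_{\rm bad}|\le\epsilon n$.
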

	\begin{proof}
		Fix a good vertex $v \in V_{\rm good}$. We claim that a good cycle $C$ can never appear in the neighborhood of a good vertex. For contradiction, let $C$ be a good odd cycle appearing in the neighborhood of $v$. Let $\tilde{G} = G\Big[{\rm vert}(C) \cup \{v\}\Big]$ be the subgraph induced on the vertex $v$ and the vertices from cycle $C$. Since $\tilde{G} \subseteq G[V_{\rm good}]$, the subgraph $\tilde{G}$ is also $3$-colorable. Hence, the neighborhood of $v$ in the induced subgraph $\tilde{G}$ must be $2$-colorable, and therefore it cannot contain odd cycles, and in particular $C$. This gives us the contradiction. 
		
		Hence, any odd cycle which appears in the neighborhood $N_G(v)$ must be bad. Since the number of bad vertices is bounded by $\epsilon n$, and the cycles in $\mathcal{C}_v$ are vertex disjoint, the claim follows.
	\end{proof}
	
	On the other hand, with high probability, we show that $|\mathcal{C}_v|$ is large for all the bad vertices.
	
	\begin{claim}					\label{cl:3col-b}
		With probability at least $1 - e^{-O(\epsilon n)}$, every vertex $v \in V_{\rm bad}$ satisfies $|\mathcal{C}_v| \ge 2\epsilon n$.
	\end{claim}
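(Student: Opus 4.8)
The plan is to fix a bad vertex $v \in V_{\rm bad}$ and show that, with very high probability, the random bipartite edges $E_0$ between $V_{\rm good}$ and $V_{\rm bad}$ alone force many of the cycles in $\mathcal{C}^*$ to appear entirely inside the neighborhood $N_G(v)$. Recall that $\mathcal{C}^* = \{C^*_1, \ldots, C^*_{m^*}\}$ is a \emph{fixed} family of vertex-disjoint odd cycles of length at most $1/\delta$ inside $G[V_{\rm good}]$, and crucially it does not depend on the realization of $E_0$ or $E_1$. For each index $i \in [m^*]$, define the event $A_i^v$ that \emph{every} vertex of $C_i^*$ is adjacent to $v$ via an $E_0$ edge; since $v$ is bad and each vertex of $C_i^*$ is good, each of these at most $1/\delta$ edges is present independently with probability $p$, so $\Pr[A_i^v] \ge p^{1/\delta}$. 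Moreover, since the cycles $C_i^*$ are vertex-disjoint, the events $\{A_i^v\}_{i \in [m^*]}$ are mutually independent (they depend on disjoint sets of potential $E_0$ edges incident to $v$). If $A_i^v$ holds, then $C_i^*$ is an odd cycle of length $\le 1/\delta$ lying entirely in $G_v = G[N_G(v)]$, and these cycles are vertex-disjoint, so a maximal vertex-disjoint family $\mathcal{C}_v$ has $|\mathcal{C}_v| \ge \sum_{i} \mathbbm{1}(A_i^v)$ (here one should be slightly careful: maximality of $\mathcal{C}_v$ only guarantees it hits every odd cycle, but since the $C_i^*$ that appear are already vertex-disjoint and each has length $\le 1/\delta$, any maximal family has size at least a $\delta$-fraction of the number of such vertex-disjoint cycles available — this is exactly the "$1/\delta$ factor" remark in the algorithm's description, so $|\mathcal{C}_v| \ge \delta \sum_i \mathbbm{1}(A_i^v)$ suffices for the argument, and we'll track constants accordingly).

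The next step is a Chernoff bound. Let $X_v = \sum_{i=1}^{m^*} \mathbbm{1}(A_i^v)$, a sum of independent Bernoulli variables with $\mathbb{E}[X_v] \ge m^* p^{1/\delta}$. By the Case (i) hypothesis $m^* > 4\epsilon n/(\delta p^{1/\delta})$, we get $\mathbb{E}[X_v] > 4\epsilon n/\delta$. A standard multiplicative Chernoff bound gives $\Pr[X_v < \tfrac12 \mathbb{E}[X_v]] \le e^{-\mathbb{E}[X_v]/8} \le e^{-\epsilon n/(2\delta)}$, so with probability at least $1 - e^{-\Omega(\epsilon n)}$ we have $X_v \ge 2\epsilon n/\delta$, and hence $|\mathcal{C}_v| \ge \delta X_v \ge 2\epsilon n$. (Here we use $\delta = \theta/10$ is a constant, so the $1/\delta$ and $\delta$ factors are absorbed into the $\Omega(\cdot)$.) Since $|V_{\rm bad}| \le \epsilon n \le n$, a union bound over all bad vertices $v$ costs only a factor $n$, which is swamped by the exponential: the failure probability is at most $n \cdot e^{-\Omega(\epsilon n)} = e^{-\Omega(\epsilon n)}$, using the hypothesis $\epsilon = \Omega(\log n / n)$ to ensure $\epsilon n = \Omega(\log n)$ dominates $\ln n$. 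This yields the claimed bound.

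The main obstacle — really the only subtle point — is the relationship between the \emph{greedily constructed maximal} family $\mathcal{C}_v$ that the algorithm actually builds and the number of vertex-disjoint copies of cycles from $\mathcal{C}^*$ that the randomness plants inside $G_v$. A maximal vertex-disjoint family need not be maximum, so we cannot directly say $|\mathcal{C}_v| \ge X_v$; instead we argue that since each planted cycle has length at most $1/\delta$, any vertex in the greedy solution can "block" at most $1/\delta$ of the planted cycles, giving $|\mathcal{C}_v| \ge \delta X_v$ (more carefully: the planted cycles $\{C_i^* : A_i^v \text{ holds}\}$ are themselves vertex-disjoint and each is a valid odd cycle of length $\le 1/\delta$ in $G_v$; a maximal vertex-disjoint family in $G_v$ must intersect every one of them, and since the greedy family's cycles also have length $\le 1/\delta$ each covering $\le 1/\delta$ of the planted ones, its size is at least $\delta X_v$). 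Everything else is a routine independence observation plus Chernoff plus union bound. I would also double-check that the events $A_i^v$ depend only on $E_0$ edges and not on the adversarial $E_1$ edges or the internal structure of $G[V_{\rm bad}]$ — this is immediate since $A_i^v$ only references whether the specific good–bad pairs $(v, u)$ for $u \in C_i^*$ are in $E_0$, which are exactly the independently-sampled edges from step 4 of Definition \ref{def:pkcr}.
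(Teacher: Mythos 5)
Your proposal is correct and follows essentially the same route as the paper: define indicator variables for each cycle of $\mathcal{C}^*$ landing in $N_{G'}(v)$ via the $E_0$ edges only, use vertex-disjointness for independence, apply a multiplicative Chernoff bound with $\E[X_v] \ge m^* p^{1/\delta} > 4\epsilon n/\delta$, relate the greedy maximal family to the planted vertex-disjoint cycles by losing a factor $\delta$ (the paper's Proposition \ref{prop:max-bound}), and finish with a union bound over $V_{\rm bad}$ using $\epsilon = \Omega(\log n/n)$. The only nitpick is your passing phrase that a single \emph{vertex} of the greedy family blocks at most $1/\delta$ planted cycles (it blocks at most one, since the planted cycles are vertex-disjoint); your parenthetical correction — each greedy \emph{cycle}, having length at most $1/\delta$, blocks at most $1/\delta$ planted cycles — is the right statement and matches the paper.
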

	\begin{proof}
		Consider the subgraph $G'(V,E_0)$ consisting of edges from  $E_0$ (i.e., the randomly distributed set of edges). Fix a bad vertex $v \in V_{\rm bad}$, and let $G_v = G[N_{G}(v)]$ denote the subgraph induced by the neighborhood of $v$. We shall first give a high probability lower bound on the number of odd cycles from $\mathcal{C}^*$ which can appear in $N_{G}(v)$. Recall that $|\mathcal{C}^*| = m^*$. We also point out again that the choice of $\mathcal{C}^*$ is not affected by the choice of $E_0$ and $E_1$ edges, and can be fixed ahead.

		For every $i \in [m^*]$, we define $Z_i := \mathbbm{1}\Big({\rm vert}(C^*_i) \subseteq N_{G'}(v)\Big)$ to be the indicator random variable that the $i^{th}$ cycle appears in the neighborhood of vertex $v$ in the graph $G'$. Note that these random variables depend only on the realization of the $E_0$ edges. Then we have  
		\begin{eqnarray*}
		\E_{G}[Z_i] \ge \Pr_{E_0}\Big[{\rm vert}(C^*_i) \subseteq N_{G}(v) \Big] &\geq& \Pr_{E_0}\Big[{\rm vert}(C^*_i) \subseteq N_{G'}(v) \Big] \\
		&=& \Pr_{E_0}\Big[\forall j \in {\rm vert}(C^*_i), j \in N_{G'}(v)\Big] \\ 
		&\ge& p^{|C^*_i|} \ge p^{1/\delta}
		\end{eqnarray*}
		Here the last step uses the fact that any cycle $C^*_i \in \mathcal{C^*}$ has length at most $1/\delta$. It follows that 
		\begin{equation}
		\E_G\left[\sum_{i \in [m^*]} Z_i\right] = \sum_{i \in [m^*]}\E_G[Z_i] \ge m^*p^{1/\delta} \ge (4\epsilon/\delta)n
		\end{equation}		
		
		Furthermore, since the cycles $C^*_1,C^*_2,\ldots,C^*_{m^*}$ are vertex disjoint, the corresponding random variables $Z_1,Z_2,\ldots,Z_{m^*}$ are also independent. Therefore using Chernoff bound we get that  
		\begin{equation}					\label{eq:cycle-bound}
		\Pr_G\left[\sum_{i \in [m^*]} Z_i < (2\epsilon/\delta)n \right] \le \Pr_G\left[\sum_{i \in [m^*]} Z_i < \frac12 \E\Big[\sum_{i \in [m^*]} Z_i \Big]\right] \le e^{-\epsilon n/4\delta}
		\end{equation}

		Now let $\mathcal{C}^*_v = \{C^*_i : i \in [m^*], Z_i = 1\}$ be the set of cycles from $\mathcal{C}^*$ which appear in the neighborhood of $v$ in graph $G$ due to the $E_0$ edges. Furthermore,  let $\widetilde{\mathcal{C}}_v$ be a \emph{largest cardinality} set of vertex disjoint odd cycles of length at most $1/\delta$ in $G_v$ (which contains edges from both $E_0$ and $E_1$). Then by definition we have $|\widetilde{\mathcal{C}}_v| \ge |\mathcal{C}^*_v|$. On the other hand, by construction, the set $\mathcal{C}_v$ is a \emph{maximal set} of such vertex disjoint odd cycles in $G_v$, and therefore, it must be a $\delta$-approximation to the largest cardinality set $\widetilde{\mathcal{C}}_v$ i.e.,  $|\mathcal{C}_v| \ge \delta|\widetilde{\mathcal{C}}_v|$ (see Proposition \ref{prop:max-bound}). Therefore using Equation \ref{eq:cycle-bound}, with probability at least $1 - e^{-\epsilon n/4\delta}$ we have
		\begin{equation*}
		|\mathcal{C}_v| \ge \delta|\widetilde{\mathcal{C}}_v| \ge \delta|\mathcal{C}^*_v| \ge 2\epsilon n
		\end{equation*}
		
		Hence, for any fixed vertex $v \in V_{\rm bad}$, w.h.p. we have $|\mathcal{C}_v| \ge 2 \epsilon n$. Therefore, by a union bound and using the lower bound on $\epsilon$, we get that $\Pr_{G}\Big[\exists v \in V_{\rm bad} : |\mathcal{C}_v| < 2\epsilon n\Big] \leq \epsilon n e^{-\epsilon n/4\delta} \leq  e^{-\epsilon n/8\delta}$.
		
	\end{proof}
		 
	Combining the two claims above, it follows that w.h.p. the set $(V\setminus S)$ must exactly be the set of good vertices, and therefore $G[V\setminus S]$ must be $3$-colorable. Hence  algorithm $\mathcal{A}$ will give a $n^{\theta}$ coloring of $G[V \setminus S]$.
	
	{\bf Case (ii)} $m^* \le 4\epsilon n/(\delta p^{1/\delta})$. Let $\mathcal{C} = \mathcal{C}_{\rm good} \uplus \mathcal{C}_{\rm bad}$ be the partition of $\mathcal{C}$ into the set of good and bad cycles respectively. Then, since $\mathcal{C}_{\rm good}$ is a set of vertex disjoint odd cycles of length at most $1/\delta$ in $G[V_{\rm good}]$, it follows that $|\mathcal{C}_{\rm good}| \le |\mathcal{C^*}| \le 4\epsilon n/(\delta p^{1/\delta})$. Furthermore, by arguments similar to the proof of Claim \ref{cl:3col-a}, we have $|\mathcal{C}_{\rm bad}| \le \epsilon n$. Therefore, combining the two bounds, we have $|\mathcal{C}| \leq 5\epsilon n/(\delta p^{1/\delta})$. Since every cycle $C \in \mathcal{C}$ contains at most $1/\delta$ vertices, the total number of vertices thrown away at this step is at most $5\epsilon n/(\delta^2 p^{1/\delta})$. Furthermore, using the {\em maximality} of $\mathcal{C}$, we know that the induced subgraph $G' = G[V']$ must be free of odd cycles of length at most $1/\delta$. Therefore, using Corollary \ref{corr:coloring}, we can color $G'$ using $\tilde{O}(n^{2\delta})$ colors. This concludes the analysis of case (ii).
	
	{\bf Putting Things Together}: If case (i) holds, then w.h.p., in the {\em Many short odd cycles} block of the algorithm, the set $S$ constructed is identical to $V_{\rm bad}$, in which case the algorithm $\mathcal{A}$ will find a $n^\theta$-coloring of $G[V\setminus S] = G[V_{\rm good}]$. In particular, this implies that the conditions of the "if" block will be satisfied and the algorithm will return a $n^\theta$-coloring of $(1-\epsilon)n$ vertices. 
	
	On the other hand, if case (ii) holds, we know that $m \leq 5\epsilon n/(p^{1/\delta}\delta)$, and the {\em Few short odd cycles} block deletes at most $5\epsilon n/(p^{1/\delta}\delta^2)$ vertices, and colors the remaining vertices using $\tilde{O}(n^{2\delta})$ colors. Then the {\em else} block of the algorithm will return a $\tilde{O}(n^{2\delta})$ coloring of $\Big(1 - 5\epsilon n/(p^{1/\delta}\delta^2)\Big)n$ vertices. Since the else block is evaluated only when the conditions of the {\em if} block are not satisfied, it follows that in this case, the algorithm will throw away at most $\max\left(\epsilon n, 5\epsilon n/(\delta^2p^{1/\delta})\right) = O(\epsilon n /\delta^2p^{1/\delta})$ vertices, and color the remaining graph with at most $\max\left(n^{\theta},\tilde{O}(n^{2\delta})\right) = n^\theta$ colors. 
	
	Combining the two above cases gives us Theorem \ref{thm:3col-random}.

\section{Partial $2$-Coloring in the Semi-random model}

In this section, we give an efficient approximation algorithm for partial $2$-coloring problem in the semi-random model with tighter guarantees. The following theorem formally states our guarantees for this setting.

\TwoColRandom*

The algorithm for the above theorem (described as Algorithm \ref{alg:alg-2col-random}) is quite similar to P$3$C-Random algorithm, but overall, the algorithm and its analysis are much simpler. We begin by describing the algorithm.

		\begin{algorithm} 
			\SetAlgoLined
			For every vertex $v \in V$, compute a greedy triangle count as follows: \\
			\For{$v \in V$}
			{
			Let $G_v = G[N_G(v)]$ be the graph induced on the neighborhood of $v$\;
			Construct a maximal matching $T(v)$ in $G_v$ using greedy algorithm\;
			Set $t(v) \gets |T(v)|$\;
			}
			Let $S \gets \{v \in V: t(v) \ge 2\epsilon n\}$\;
			Let $G_0 = G[V\setminus S]$\;
			Let $G_1 \subseteq G$ be the independent set obtained using the $2$-factor approximation for Vertex Cover on $G$\;
			\If{$|{\rm vert}(G_0)| \geq |{\rm vert}(G_1)|$ and $G_0$ is bipartite}
			{
				Output bipartite graph $G_0$\;
			}
			\Else{
				Output independent set $G_1$\;
			}	
			
		\caption{P$2$C-Random}		\label{alg:alg-2col-random}
		\end{algorithm}

The key difference here is that the algorithm uses triangles as forbidden subgraphs for identifying bad vertices instead of neighborhoods with short odd cycles. As before, the algorithm broadly addresses two cases depending on the size of the maximum matching in $G[V_{\rm good}]$. Suppose the subgraph $G[V_{\rm good}]$ contains a linear sized matching $M$. Then, for every bad vertex $v \in V_{\rm bad}$, with high probability, at least one of the matching edges from $M$ will appear in the neighborhood of $v$, which together will form a triangle, which can then be used to identify the bad vertices. On the other hand, if the size of maximum matching in $G[V_{\rm good}]$ is small, then the subgraph $G[V_{\rm good}]$ and consequently $G$ must admit a small sized vertex cover. Therefore, using the greedy approximation algorithm for vertex cover, we can find a small sized vertex cover, whose complement must be a large independent set (which is $1$-colorable).

\subsection{Proof of Theorem \ref{thm:2col-random}}

Let $M \subseteq G[V_{\rm good}]$ be a {\em fixed matching of maximum size} in $G[V_{\rm good}]$, and let $m^* := |M|$ denote the size of the maximum matching. {\em We point out that the matching $M^*$ is not affected by the realization of edges between $V_{\rm good}$ and $V_{\rm bad}$ (i.e, the $E_0$ and $E_1$ edges)}. As before, we break the analysis into two cases depending on whether $m^*$ is small or large.

{\bf Case (i): $m^* \ge (8\epsilon/p^2) n$}: This case is similar to case (i) of the proof of Theorem \ref{thm:3col-random}. We begin by stating and proving two lemmas which say that the greedy triangle count $t(v)$ is small for all the good vertices, and large for all the bad vertices.

\begin{lemma}						\label{lem:2col-(a)}
	For every good vertex $v \in V_{\rm good}$, we have $t(v) \leq \epsilon n$ 
\end{lemma}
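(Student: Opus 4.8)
The plan is to mirror the argument of Claim \ref{cl:3col-a}, replacing ``short odd cycle in the neighborhood'' with ``edge in the neighborhood that closes a triangle with a good vertex''. First I would observe that the quantity $t(v)$ counts a maximal matching $T(v)$ in the induced subgraph $G_v = G[N_G(v)]$; every edge of this matching, together with $v$, forms a triangle in $G$, and the matching edges are pairwise vertex-disjoint. So it suffices to bound the number of vertex-disjoint triangles through $v$ whose other two vertices lie in $N_G(v)$.

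The key structural claim is: no such triangle can have both of its non-$v$ vertices in $V_{\rm good}$ when $v$ itself is good. Indeed, suppose $v \in V_{\rm good}$ and $\{v, a, b\}$ is a triangle with $a, b \in N_G(v) \cap V_{\rm good}$. Then the induced subgraph $G\big[\{v,a,b\}\big] \subseteq G[V_{\rm good}]$ is $2$-colorable since $G[V_{\rm good}]$ is $2$-colorable (bipartite), but a triangle is not $2$-colorable --- contradiction. Equivalently, one can phrase it as in Claim \ref{cl:3col-a}: the neighborhood of $v$ inside $G[V_{\rm good}]$ must be $1$-colorable (an independent set) because $G[V_{\rm good}]$ is bipartite, hence contains no edge, hence no triangle through $v$ can lie entirely in $V_{\rm good}$. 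Either phrasing works; I would use the bipartiteness-of-$G[V_{\rm good}]$ one directly.

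Consequently, for a good vertex $v$, every edge $e = \{a,b\}$ of the matching $T(v)$ must have at least one endpoint in $V_{\rm bad}$. Since the matching edges are vertex-disjoint, distinct matching edges use distinct bad vertices, so $|T(v)| = t(v) \le |V_{\rm bad}| \le \epsilon n$, which is the claimed bound.

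I do not expect any real obstacle here --- this lemma is the ``easy direction'' and is essentially a one-line consequence of the bipartiteness of $G[V_{\rm good}]$; the only thing to be a little careful about is that $T(v)$ is merely a \emph{maximal} (greedily constructed) matching rather than a maximum one, but that is irrelevant for the \emph{upper} bound on $t(v)$ since any matching in $G_v$ injects into $V_{\rm bad}$ by the argument above. The genuine work in Case (i) will come in the companion lemma (the high-probability \emph{lower} bound $t(v) \ge 2\epsilon n$ for bad $v$), which is where the randomness of $E_0$ and a Chernoff bound over the disjoint matching edges of $M$ enter, analogous to Claim \ref{cl:3col-b}.
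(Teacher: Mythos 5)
Your proposal is correct and matches the paper's proof essentially verbatim: both argue that since $G[V_{\rm good}]$ is bipartite and hence triangle-free, every triangle formed by $v$ and a matching edge of $T(v)$ must contain a bad vertex, which for good $v$ must lie on the matching edge, and vertex-disjointness of the matching edges then gives $t(v) \le |V_{\rm bad}| \le \epsilon n$. No gaps; your remark that maximality versus maximum is irrelevant for this upper bound is also consistent with the paper.
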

\begin{proof}
	Fix a good vertex $v \in V_{\rm good}$, and let $T(v)$ be a set of edges as constructed in the algorithm. Observe that every edge $(a,b) \in T(v)$ along with vertex $v$ induces a triangle in $G$. Furthermore, since $G[V_{\rm good}]$ is bipartite (and hence triangle free), any triangle $T \subseteq G$ must contain at least one bad vertex. Therefore, as the vertex $v$ is good, every edge $e \in T(v)$ must contain at least one bad vertex. Finally, we observe that the edges in $T(v)$ are vertex disjoint, and there are at most $\epsilon n$ bad vertices, which together implies that $t(v) = |T(v)| \leq \epsilon n$.
\end{proof}

\begin{lemma}						\label{lem:2col-(b)}
	With probability at least $1 - e^{-O(\epsilon n)}$, for every vertex $v \in V_{\rm bad}$, we have $t(v) \ge 2\epsilon n$.
\end{lemma}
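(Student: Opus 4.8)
The plan is to mirror the structure of Claim \ref{cl:3col-b} from the partial $3$-coloring analysis, but with the maximum matching $M$ of $G[V_{\rm good}]$ playing the role of the vertex-disjoint short odd cycles. Fix a bad vertex $v \in V_{\rm bad}$. For each edge $e = \{a,b\} \in M$, let $Z_e$ be the indicator that both $a$ and $b$ lie in $N_{G'}(v)$, where $G' = (V, E_0)$ is the subgraph of random edges between $V_{\rm good}$ and $V_{\rm bad}$. Since $a,b \in V_{\rm good}$ and $v \in V_{\rm bad}$, the two edges $\{v,a\}$ and $\{v,b\}$ are each present independently with probability $p$, so $\E[Z_e] \ge p^2$, and hence $\E\big[\sum_{e \in M} Z_e\big] \ge m^* p^2 \ge 8\epsilon n$ by the Case (i) hypothesis. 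Crucially the matching edges are vertex disjoint, so the events $Z_e$ depend on disjoint sets of $E_0$ edges and are mutually independent; a Chernoff bound then gives $\Pr\big[\sum_{e\in M} Z_e < 4\epsilon n\big] \le \Pr\big[\sum_e Z_e < \tfrac12 \E[\sum_e Z_e]\big] \le e^{-\Omega(\epsilon n)}$.

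Next I would pass from this to a lower bound on $t(v) = |T(v)|$. Each edge $e \in M$ with $Z_e = 1$ is an edge of $G_v = G[N_G(v)]$ (adding $E_1$ edges only enlarges neighborhoods), so the set $M_v := \{e \in M : Z_e = 1\}$ is a matching inside $G_v$ of size $\ge 4\epsilon n$ with high probability. Since $T(v)$ is a \emph{maximal} matching in $G_v$ produced greedily, every edge of $M_v$ shares an endpoint with some edge of $T(v)$, and each edge of $T(v)$ can be hit by at most two edges of any other matching; therefore $|M_v| \le 2\,|T(v)|$, giving $t(v) = |T(v)| \ge |M_v|/2 \ge 2\epsilon n$. (This is the matching analogue of Proposition \ref{prop:max-bound}; alternatively one can observe that a maximal matching is a $2$-approximation to the maximum matching and $M_v$ witnesses that the maximum matching in $G_v$ has size $\ge 4\epsilon n$.)

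Finally, the bound above holds for a single fixed $v \in V_{\rm bad}$ with failure probability $e^{-\Omega(\epsilon n)}$; since $|V_{\rm bad}| \le \epsilon n \le n$, a union bound over all bad vertices, together with the assumption $\epsilon = \Omega(\log n/n)$ so that $n \cdot e^{-\Omega(\epsilon n)} = e^{-\Omega(\epsilon n)}$ is still small, shows that with probability at least $1 - e^{-O(\epsilon n)}$ every $v \in V_{\rm bad}$ satisfies $t(v) \ge 2\epsilon n$, as claimed.

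I expect the only real subtlety to be the passage from ``many matching edges of $M$ land in $N_G(v)$'' to ``the greedily constructed maximal matching $T(v)$ is large'' — i.e., making precise the $2$-approximation factor between a maximal matching and the largest matching witnessed by $M_v$, exactly as Case (i) of Theorem \ref{thm:3col-random} uses Proposition \ref{prop:max-bound} with its $1/\delta$ loss (here the loss is just the constant $1/2$, which is why the hypothesis carries a factor $8$ rather than something depending on $\delta$). Everything else — the independence of the $Z_e$'s from vertex-disjointness of $M$, the Chernoff estimate, and the final union bound — is routine and parallels the $3$-coloring argument verbatim.
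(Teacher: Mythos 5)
Your proposal is correct and follows essentially the same route as the paper's proof: define the indicators $Z_e$ for the fixed maximum matching $M$ of $G[V_{\rm good}]$, use vertex-disjointness for independence and a Chernoff bound, pass from $M_v$ to $t(v)$ via the factor-$2$ relation between the greedy maximal matching $T(v)$ and the maximum matching of $G_v$, and finish with a union bound over $V_{\rm bad}$. The only difference is cosmetic — you spell out the ``each edge of $T(v)$ is hit by at most two edges of $M_v$'' argument where the paper simply cites that a maximal matching is a $2$-approximation to the maximum matching.
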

\begin{proof}
	Let $G'$ be the subgraph on $G$ consisting of edges from $E_0$ (i.e,. the randomly sampled set of edges). Recall that $M  = \{(a_i,b_i)\}_{i \in [m^*]}\subseteq G[V_{\rm good}]$ is the fixed maximum matching in $G[V_{\rm good}]$ of size $m^*$. Let $Z_i := \mathbbm{1}\Big(\{a_i,b_i \in N_{G'}(v)\}\Big)$ be the indicator variable for the event that $a_i,b_i$ are neighbors of $v$ in the graph $G'$. Then, 
	\begin{equation}
	\E_{G'}\left[\sum_{i \in [m^*]}Z_i\right] = \sum_{i \in [m^*]} \Pr_{G'}\left[\{a_i,b_i \in N_{G'}(v)\}\right] = m^*p^2 \ge 8\epsilon n
	\end{equation} 
	Furthermore, since the edges in $M$ are vertex disjoint, the random variables $Z_1,\ldots,Z_{m^*}$ are independent and identical. Therefore using Chernoff bound we get
	\begin{equation}					\label{eq:match-size}
	\Pr_{G'}\left[\sum_{ i \in [m^*]} Z_i \le 4\epsilon n\right] \le \Pr_{G'}\left[\sum_{ i \in [m^*]} Z_i \le \frac12 \E \sum_{i \in [m^*]} Z_i \right] \le e^{-O(\epsilon n)} 	
	\end{equation} 	
	Let $M_v = \{(a_i,b_i): i\in [m^*],Z_i = 1\}$ be the set of matching edges from $M^*$ appearing in the neighborhood of $v$ in the graph $G'$. Furthermore, let $\tilde{M}_v$ be a maximum matching in the subgraph $G_V: = G[N_G(v)]$ induced on the neighborhood of $v$ (which contains both $E_0$ and $E_1$ edges). Then, by definition we have $|\tilde{M}_v| \ge |M_v|$. On the other hand, by construction, the set $T(v)$ is a maximal matching in the induced subgraph $G_v$. Since a maximal matching is a $2$-approximation to the maximum matching, it follows that $|T(v)| \ge  |\tilde{M}_v|/2 \ge  |M_v|/2 \ge 2\epsilon n$.   
	
	Therefore, for a fixed bad vertex $v \in V_{\rm bad}$, with probability at least $1 - e^{-O(\epsilon n)}$, we have $t(v) \ge 2\epsilon n$. The claim now follows by taking a union bound over all vertices $v \in V_{\rm bad}$.
\end{proof}

Therefore, combining Lemmas \ref{lem:2col-(a)} and \ref{lem:2col-(b)}, we know that with probability at least $1 - e^{-O(\epsilon n)}$, we have $t(v) \le \epsilon n$ if and only if $v \in V_{\rm good}$. Conditioned on this event, the set $S$ must exactly be the set of bad vertices, in which case $G[V \setminus S] = G[V_{\rm good}]$ is bipartite.

{\bf Case (ii): $m^* \le (8\epsilon/p^2) n$}: Since the size of maximum matching in $G[V_{\rm good}]$ is at most $(8\epsilon/p^2) n$, and $G[V_{\rm good}]$ is bipartite, by K\"onig's theorem (Theorem 2.1.1~\cite{diestel2012graph}), it follows that the minimum vertex cover of $G[V_{\rm good}]$ has size at most $(8\epsilon/p^2) n$. Then $G$ has a vertex cover of size at most $(8\epsilon/p^2) n + \epsilon n \leq (10\epsilon/p^2) n$. Therefore, the greedy approximation algorithm for vertex cover returns a vertex cover $S'$ of size at most $(20\epsilon/p^2)n$, and consequently, $V\setminus S'$ will be an independent set of size at least $(1 - (20\epsilon/p^2))n$.

{\bf Putting things together}: In case (i), the algorithm throws away at most $\epsilon n$ vertices and returns a $2$-colorable graph, with probability at least $1 - e^{-O(\epsilon n)}$. In case (ii), the algorithm throws away at most $O(\epsilon/p^2)n$ vertices, and returns an indpendent set. Combining the two cases gives us the guarantees for Theorem \ref{thm:2col-random}.

\section{Conclusion}

In this work we consider the problem of coloring partial $3$-colorable graphs in adversarial and semi-random settings. In the adversarial setting, we give an efficient approximation algorithm which can color $(1 - O(\epsilon^c))$-fraction of vertices using $\tilde{O}(n^{0.25 + \epsilon^{c'}})$ colors. On the other hand, the best known approximation guarantees for $3$-colorable graphs is $n^{0.199}$~\cite{kawarabayashi_thorup_2017}. An obvious open question here is to achieve analogous approximation bounds for partially $3$-colorable graphs as well. 

One direct way to improve on our approximation bounds in the adversarial setting is through the use of more efficient degree reduction mechanisms as typically done in the exact $3$-coloring setting \cite{blum_karger_1997},\cite{kawarabayashi_thorup_2012,kawarabayashi_thorup_2017} using combinatorial techniques like Blum's coloring tools~\cite{Blum}. However, these tools rely on fragile combinatorial properties present in $3$-colorable graphs (e.g. two vertices whose common neighborhood is not an independent set must have the same color in any legal coloring), and as such, it is not obvious how to extend these techniques to the setting of partially $3$-colorable graphs.

In the semi-random model, we show how any efficient algorithm for exact $3$-coloring that uses $n^{\theta}$ colors can be leveraged to obtain an efficient algorithm in this setting which uses the same number of colors with high probability and also does not remove too many vertices. An obvious next step would be to see if similar results can also be obtained for partially $k$-colorable graphs with $k>3$. Another interesting question would be  to see if one can design efficient approximation algorithms with similar guarantees, where the adversary can also delete the randomly sampled edges.

\section*{Acknowledgements}
AL was supported in part by SERB Award ECR/2017/003296. SG would like to thank Pasin Manurangsi for pointing him to the Odd Cycle Transversal problem.

\bibliographystyle{alpha}
\bibliography{main,related}

\newcommand{\etalchar}[1]{$^{#1}$}
\begin{thebibliography}{ACMM05}

\bibitem[AC06]{AC06}
Sanjeev Arora and Eden Chlamtac.
\newblock New approximation guarantee for chromatic number.
\newblock In {\em Proceedings of the thirty-eighth annual ACM symposium on
  Theory of computing}, pages 215--224. ACM, 2006.

\bibitem[ACMM05]{ACMM05}
Amit Agarwal, Moses Charikar, Konstantin Makarychev, and Yury Makarychev.
\newblock O(sqrt{log n})) approximation algorithms for min uncut, min 2cnf
  deletion, and directed cut problems.
\newblock In {\em Proceedings of the thirty-seventh annual ACM symposium on
  Theory of computing}, pages 573--581. ACM, 2005.

\bibitem[AG11]{AG11}
Sanjeev Arora and Rong Ge.
\newblock New tools for graph coloring.
\newblock In {\em Approximation, Randomization, and Combinatorial Optimization.
  Algorithms and Techniques - 14th International Workshop, {APPROX} 2011, and
  15th International Workshop, {RANDOM} 2011, Princeton, NJ, USA, August 17-19,
  2011. Proceedings}, pages 1--12, 2011.

\bibitem[AK97]{AK97}
Noga Alon and Nabil Kahale.
\newblock A spectral technique for coloring random 3-colorable graphs.
\newblock {\em SIAM Journal on Computing}, 26(6):1733--1748, 1997.

\bibitem[ARV09]{arora2009expander}
Sanjeev Arora, Satish Rao, and Umesh Vazirani.
\newblock Expander flows, geometric embeddings and graph partitioning.
\newblock {\em Journal of the ACM (JACM)}, 56(2):5, 2009.

\bibitem[BK97]{blum_karger_1997}
Avrim Blum and David Karger.
\newblock An algorithm for 3-colorable graphs.
\newblock {\em Information Processing Letters}, 61(1):49–53, 1997.

\bibitem[BK09]{KB09}
Nikhil Bansal and Subhash Khot.
\newblock Optimal long code test with one free bit.
\newblock In {\em 2009 50th Annual IEEE Symposium on Foundations of Computer
  Science}, pages 453--462. IEEE, 2009.

\bibitem[Blu90]{Blum}
Avrim Blum.
\newblock Some tools for approximate 3-coloring.
\newblock In {\em Proceedings [1990] 31st Annual Symposium on Foundations of
  Computer Science}, pages 554--562. IEEE, 1990.

\bibitem[Blu94]{Blum94}
Avrim Blum.
\newblock New approximation algorithms for graph coloring.
\newblock {\em J. {ACM}}, 41(3):470--516, 1994.

\bibitem[BRS11]{BRS11}
Boaz Barak, Prasad Raghavendra, and David Steurer.
\newblock Rounding semidefinite programming hierarchies via global correlation.
\newblock In {\em {IEEE} 52nd Annual Symposium on Foundations of Computer
  Science, {FOCS} 2011, Palm Springs, CA, USA, October 22-25, 2011}, pages
  472--481, 2011.

\bibitem[BS95]{BS95}
Avrim Blum and Joel Spencer.
\newblock Coloring random and semi-random k-colorable graphs.
\newblock {\em Journal of Algorithms}, 19(2):204--234, 1995.

\bibitem[Chl07]{chlamtac_2007}
Eden Chlamtac.
\newblock Approximation algorithms using hierarchies of semidefinite
  programming relaxations.
\newblock {\em 48th Annual IEEE Symposium on Foundations of Computer Science
  (FOCS07)}, 2007.

\bibitem[DF16]{FR16}
Roee David and Uriel Feige.
\newblock On the effect of randomness on planted 3-coloring models.
\newblock In {\em Proceedings of the forty-eighth annual ACM symposium on
  Theory of Computing}, pages 77--90. ACM, 2016.

\bibitem[Die12]{diestel2012graph}
Reinhard Diestel.
\newblock Graph theory, volume 173 of.
\newblock {\em Graduate texts in mathematics}, page~7, 2012.

\bibitem[FK01]{FK01}
Uriel Feige and Joe Kilian.
\newblock Heuristics for semirandom graph problems.
\newblock {\em Journal of Computer and System Sciences}, 63(4):639--671, 2001.

\bibitem[GVY96]{GVY96}
Naveen Garg, Vijay~V Vazirani, and Mihalis Yannakakis.
\newblock Approximate max-flow min-(multi) cut theorems and their applications.
\newblock {\em SIAM Journal on Computing}, 25(2):235--251, 1996.

\bibitem[Kar72]{Karp72}
Richard~M Karp.
\newblock Reducibility among combinatorial problems.
\newblock In {\em Complexity of computer computations}, pages 85--103.
  Springer, 1972.

\bibitem[KLT17]{kumar_louis_tulsiani}
Akash Kumar, Anand Louis, and Madhur Tulsiani.
\newblock Finding pseudorandom colorings of pseudorandom graphs.
\newblock In {\em 37th {IARCS} Annual Conference on Foundations of Software
  Technology and Theoretical Computer Science, {FSTTCS} 2017, December 11-15,
  2017, Kanpur, India}, pages 37:1--37:12, 2017.

\bibitem[KMM11]{KMM11}
Alexandra Kolla, Konstantin Makarychev, and Yury Makarychev.
\newblock How to play unique games against a semi-random adversary: Study of
  semi-random models of unique games.
\newblock In {\em 2011 IEEE 52nd Annual Symposium on Foundations of Computer
  Science}, pages 443--452. IEEE, 2011.

\bibitem[KMS98]{KMS98}
David Karger, Rajeev Motwani, and Madhu Sudan.
\newblock Approximate graph coloring by semidefinite programming.
\newblock {\em Journal of the ACM (JACM)}, 45(2):246--265, 1998.

\bibitem[KT12]{kawarabayashi_thorup_2012}
Ken-Ichi Kawarabayashi and Mikkel Thorup.
\newblock Combinatorial coloring of 3-colorable graphs.
\newblock {\em 2012 IEEE 53rd Annual Symposium on Foundations of Computer
  Science}, 2012.

\bibitem[KT17]{kawarabayashi_thorup_2017}
Ken-Ichi Kawarabayashi and Mikkel Thorup.
\newblock Coloring 3-colorable graphs with less than $n^{1/5}$ colors.
\newblock {\em Journal of the ACM}, 64(1):1–23, 2017.

\bibitem[LNR{\etalchar{+}}14]{LNVS14}
Daniel Lokshtanov, NS~Narayanaswamy, Venkatesh Raman, MS~Ramanujan, and Saket
  Saurabh.
\newblock Faster parameterized algorithms using linear programming.
\newblock {\em ACM Transactions on Algorithms (TALG)}, 11(2):15, 2014.

\bibitem[MMT18]{MMT18}
Theo McKenzie, Hermish Mehta, and Luca Trevisan.
\newblock A new algorithm for the robust semi-random independent set problem.
\newblock {\em CoRR}, abs/1808.03633, 2018.

\bibitem[MMV12]{MMV12}
Konstantin Makarychev, Yury Makarychev, and Aravindan Vijayaraghavan.
\newblock Approximation algorithms for semi-random partitioning problems.
\newblock In {\em Proceedings of the 44th Symposium on Theory of Computing
  Conference, {STOC} 2012, New York, NY, USA, May 19 - 22, 2012}, pages
  367--384, 2012.

\bibitem[MMV14]{MMV14}
Konstantin Makarychev, Yury Makarychev, and Aravindan Vijayaraghavan.
\newblock Algorithms for semi-random correlation clustering.
\newblock {\em CoRR}, abs/1406.5667, 2014.

\bibitem[MS85]{RS85ramsey}
Burkhard Monien and Ewald Speckenmeyer.
\newblock Ramsey numbers and an approximation algorithm for the vertex cover
  problem.
\newblock {\em Acta Informatica}, 22(1):115--123, 1985.

\bibitem[NRRS12]{NVS12}
NS~Narayanaswamy, Venkatesh Raman, MS~Ramanujan, and Saket Saurabh.
\newblock Lp can be a cure for parameterized problems.
\newblock In {\em STACS'12 (29th Symposium on Theoretical Aspects of Computer
  Science)}, volume~14, pages 338--349. LIPIcs, 2012.

\bibitem[Rag08]{R08}
Prasad Raghavendra.
\newblock Optimal algorithms and inapproximability results for every csp?
\newblock In {\em Proceedings of the 40th Annual {ACM} Symposium on Theory of
  Computing, Victoria, British Columbia, Canada, May 17-20, 2008}, pages
  245--254, 2008.

\bibitem[RS09]{RS09a}
Prasad Raghavendra and David Steurer.
\newblock How to round any {CSP}.
\newblock In {\em 50th Annual {IEEE} Symposium on Foundations of Computer
  Science, {FOCS} 2009, October 25-27, 2009, Atlanta, Georgia, {USA}}, pages
  586--594, 2009.

\bibitem[RSV04]{RVS04}
Bruce Reed, Kaleigh Smith, and Adrian Vetta.
\newblock Finding odd cycle transversals.
\newblock {\em Operations Research Letters}, 32(4):299--301, 2004.

\bibitem[Ste17]{stein17}
Jacob Steinhardt.
\newblock Does robustness imply tractability? a lower bound for planted clique
  in the semi-random model.
\newblock {\em arXiv preprint arXiv:1704.05120}, 2017.

\bibitem[Wig83]{wigderson_1983}
Avi Wigderson.
\newblock Improving the performance guarantee for approximate graph coloring.
\newblock {\em Journal of the ACM}, 30(4):729–735, Jan 1983.

\bibitem[WS11]{williamson2011design}
David~P Williamson and David~B Shmoys.
\newblock {\em The design of approximation algorithms}.
\newblock Cambridge university press, 2011.

\bibitem[Yan78]{Yann78}
Mihalis Yannakakis.
\newblock Node-and edge-deletion np-complete problems.
\newblock In {\em Proceedings of the tenth annual ACM symposium on Theory of
  computing}, pages 253--264. ACM, 1978.

\end{thebibliography}

\appendix
		
\section{Proof of Theorem \ref{thm:appx-col}}				\label{sec:3col-indset}
	
	The proof of the theorem is adapted from \cite{KMS98,AC06} and the rounding algorithm used is identical to theirs. The proof of the theorem goes through the following lemma which says that there exists a randomized algorithm which can find large sized independent sets in approximately vector $3$-colorable graphs with bounded degree.
	The proof presented here is adapted from the proof for the exact $3$-colorable case in Section 13.2 \cite{williamson2011design}.
	
	\begin{lemma}					\label{lem:3col-indset}
		Let $G = (V,E)$ be a graph on $n$ vertices with maximum degree $\Delta$ which is $(3,\alpha)$-vector colorable, where $\alpha \leq 1/2$. Then there exists an efficient randomized algorithm which finds an independent set of size $ \Omega\left(n(\ln \Delta)^{-1 / 2} \Delta^{-\frac{\frac{3}{4}+\alpha-{{\alpha}^2}}{(\frac{3}{2}-\alpha)^2}  }\right)$ with high probability.
	\end{lemma}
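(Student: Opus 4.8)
The plan is to adapt the random Gaussian‑projection rounding of \cite{KMS98} (as presented in Section 13.2 of \cite{williamson2011design}) to the approximate setting, tracking how the slack $\alpha$ enters the exponent. Let $v_1,\dots,v_n\in\mathbb{R}^d$ be the given $(3,\alpha)$-vector coloring, sample $r\sim N(0,I_d)$, set $\beta:=\sqrt{(3/2-\alpha)/(1/2+\alpha)}$ (so $\beta\in[1,\sqrt3]$ for $\alpha\in[0,1/2]$), and fix a threshold $t=\sqrt{2\ln\Delta}/\beta$ (a constant factor larger is harmless). Put $S=\{i:\langle v_i,r\rangle\ge t\}$, and form an independent set $Y$ by deleting one endpoint of every edge of $G$ with both endpoints in $S$; since this deletes a vertex cover of $G[S]$, the set $Y$ is independent in $G$ with $|Y|\ge |S|-(\text{\# monochromatic edges})$. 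If $\Delta=O(1)$ the claimed bound is $\Omega(n)$ and a greedy independent set of size $n/(\Delta+1)$ already works, so assume $\Delta$ exceeds a suitable absolute constant.

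For the size of $S$: $\|v_i\|=1$ implies $\langle v_i,r\rangle\sim N(0,1)$, so $\E|S|=n\Pr[g\ge t]$ for $g\sim N(0,1)$. Using the Gaussian tail lower bound $\Pr[g\ge t]\ge c\,t^{-1}e^{-t^2/2}$ for $t\ge 1$ and $e^{-t^2/2}=e^{-\ln\Delta/\beta^2}=\Delta^{-1/\beta^2}$, we get $\E|S|=\Omega\!\big(n(\ln\Delta)^{-1/2}\Delta^{-1/\beta^2}\big)$. Here $1/\beta^2=\frac{1/2+\alpha}{3/2-\alpha}=\frac{3/4+\alpha-\alpha^2}{(3/2-\alpha)^2}$, the last equality because $3/4+\alpha-\alpha^2=(3/2-\alpha)(1/2+\alpha)$; this is exactly the exponent in the statement and equals $1/3$ at $\alpha=0$.

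The heart of the argument is bounding the expected number of monochromatic edges. Fix $\{i,j\}\in E$ and let $\rho:=\langle v_i,v_j\rangle\le -1/2+\alpha\le 0$. Write $v_j=\rho v_i+\sqrt{1-\rho^2}\,u$ with $u$ a unit vector orthogonal to $v_i$; then $X:=\langle v_i,r\rangle$ and $X^\perp:=\langle u,r\rangle$ are \emph{independent} standard Gaussians. If $\langle v_i,r\rangle\ge t$ and $\langle v_j,r\rangle\ge t$, then $\rho X+\sqrt{1-\rho^2}X^\perp\ge t$ while $\rho X\le \rho t$ (as $\rho\le 0$ and $X\ge t$), so $X^\perp\ge t(1-\rho)/\sqrt{1-\rho^2}=t\sqrt{(1-\rho)/(1+\rho)}\ge t\beta$, using that $\sqrt{(1-\rho)/(1+\rho)}$ is decreasing in $\rho$ and $\rho\le -1/2+\alpha$. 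Hence by independence $\Pr[i,j\in S]\le \Pr[g\ge t]\cdot\Pr[g\ge\beta t]$, and since $\beta t=\sqrt{2\ln\Delta}$ the Chernoff bound $\Pr[g\ge x]\le e^{-x^2/2}$ gives $\Pr[g\ge\beta t]\le 1/\Delta$. Therefore the expected number of monochromatic edges is at most $|E|\cdot\Pr[g\ge t]/\Delta\le \tfrac{n\Delta}{2}\cdot\Pr[g\ge t]/\Delta=\tfrac12\E|S|$, so $\E[|Y|]\ge\tfrac12\E|S|=:\mu=\Omega\!\big(n(\ln\Delta)^{-1/2}\Delta^{-1/\beta^2}\big)$.

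Finally, to get a high‑probability guarantee I would use the trivial bound $|Y|\le n$ and reverse Markov: $\Pr[|Y|\ge\mu/2]\ge(\mu/2)/(n-\mu/2)\ge\mu/(2n)$. Running the rounding $N=\lceil(2n/\mu)\ln n\rceil=\mathrm{poly}(n)$ independent times and keeping the largest output, the probability that every run produces fewer than $\mu/2$ vertices is at most $(1-\mu/(2n))^N\le 1/n$, which yields the claimed independent set with high probability in polynomial time. The step that will require the most care is the edge analysis: setting up the orthogonal decomposition correctly, verifying that the worst case is $\rho=-1/2+\alpha$ (which is precisely where $\alpha\le 1/2$ is used, to ensure $\rho\le 0$ and hence $\rho X\le\rho t$), and checking that $1/\beta^2$ simplifies to exactly $(3/4+\alpha-\alpha^2)/(3/2-\alpha)^2$; the Gaussian tail estimates and the boosting by repetition are routine.
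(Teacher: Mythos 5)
Your proposal is correct and follows essentially the same route as the paper: Gaussian-projection threshold rounding with the threshold calibrated by the orthogonal decomposition of $v_j$ along $v_i$ (worst case $\langle v_i,v_j\rangle=-1/2+\alpha$, giving the exponent $\frac{3/4+\alpha-\alpha^2}{(3/2-\alpha)^2}$), standard Gaussian tail bounds, and amplification by independent repetitions. The only differences are cosmetic — you delete one endpoint per monochromatic edge and bound their expected number via independence of $\langle v_i,r\rangle$ and $\langle u,r\rangle$, whereas the paper keeps only thresholded vertices with no thresholded neighbor and uses a conditional union bound over the $\Delta$ neighbors — and both yield the same guarantee.
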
	
	
	\begin{proof}
	Let $v_1,v_2,\ldots,v_n \in \mathbbm{R}^d$ be a $(3,\alpha)$-vector coloring of $G$. Consider the following randomized hyperplane rounding procedure for finding a independent set:
	\begin{itemize}
		\item[1.]  Draw a random vector $r \sim N(0,1)^d$ by picking each coordinate independently from a standard Gaussian.
		\item[2.]  Compute the sets ${S(\beta)=\left\{i \in V :  \langle v_{i} , r \rangle \geq \beta\right\}}$ and  ${S^{\prime}(\beta)=\left\{i \in S(\beta) : \forall(i, j) \in E, j \notin S(\beta)\right\}}$.
		\item[3.] Return the set $S'(\beta)$.
	\end{itemize}
	
	Here $\beta = \beta(\alpha,\Delta) > 0 $ is a quantity which depends on $\alpha$ and $\Delta$ which will be fixed later. Clearly, by construction, the above procedure returns an independent set. The rest of the proof will just involve lower bounding the expected size of the set $S'(\beta)$.  
	
	Let $\Phi(\cdot)$ denote the gaussian CDF function, and let $\overline{\Phi}(\cdot) \overset{\rm def}{=} 1 - \Phi(\cdot)$. 
	Firstly, we observe that for any $i \in V,$ the probability that $i \in S(\beta)$ is $\overline{\Phi}(\beta)$, where $\Phi(\cdot)$ is the gaussian CDF function. This implies $E[|S(\beta)|]=n \overline{\Phi}(\beta)$.  Now consider the probability that a vertex $i$ is in $S(\beta)$ but not in $S^{\prime}(\beta)$. Observe that 
	$\operatorname{Pr}\left[i \notin S^{\prime}(\beta) | i \in S(\beta)\right]=\operatorname{Pr}\left[\exists(i, j) \in E : \langle v_{j} , r \rangle \geq \beta | \langle v_{i} , r \rangle \geq \beta\right]$. Now fix a pair of neighbouring vertices $(i,j)$. We know that $v_j$ can we written as
	\begin{equation*}
		v_j=\left(-\frac{1}{2}+\alpha_{ij}\right)v_i+\left(\sqrt{\frac{3}{4}+\alpha_{ij}-{{\alpha}^2_{ij}}}\right)u
	\end{equation*} where $u$ is a unit vector orthogonal to $v_i$ and $0 \le \alpha_{ij} \le \alpha$. Going forward, we shall assume that $\alpha_{ij} = \alpha$; it can be easily verified that the same bound holds when $\alpha_{ij} < \alpha$. Rearranging, $u$ can be written as 
	\begin{equation*}
	u=\frac{(\frac{1}{2}-\alpha)}{\sqrt{\frac{3}{4}+\alpha-{{\alpha}^2}}}v_i+\frac{1}{\sqrt{\frac{3}{4}+\alpha-{{\alpha}^2}}}v_j
	\end{equation*}
	Now, consider the inner product of $u$ and $r$, assuming $\alpha\leq\frac12$
\begin{equation}				\label{eqn:bound1}
	\langle u, r \rangle \geq \frac{\left(\frac{1}{2}-\alpha\right)}{\sqrt{\frac{3}{4}+\alpha-{{\alpha}^2}}}\beta+\frac{1}{\sqrt{\frac{3}{4}+\alpha-{{\alpha}^2}}}\beta= \frac{\left(\frac{3}{2}-\alpha\right)}{\sqrt{\frac{3}{4}+\alpha-{{\alpha}^2}}}\beta
\end{equation}
	
	where we use the fact that conditioned on the event $\{i\in S(\beta)\} \wedge \{j \in S(\beta)\}$, we must have $\langle v_i, r \rangle \ge \beta$ and $\langle v_j, r \rangle \ge \beta$ by definition of the set $S(\beta)$. Also, by our choice of parameters we have $\beta > 0$. The above sequence of observations implies the following: conditioned on the event $i \in \beta$, the probability of the event $j \in \beta$ is upper bounded by the probability of event that Eq. \ref{eqn:bound1} holds. Finally, recall that the maximum degree of the graph is at most $\Delta$. Now we proceed to bound the desired probability as follows: 
	\begin{eqnarray*}
	\operatorname{Pr}\Big[\exists(i, j) \in E : \langle v_{j} , r \rangle \geq \beta | \langle v_{i} , r \rangle \geq \beta \Big]  
	&\leq& \sum_{j :(i, j) \in E} \operatorname{Pr}\Big[ \langle v_{j} , r \rangle  \geq \beta | \langle v_{i} , r \rangle \geq \beta\Big] \\
	&\leq& \sum_{j :(i, j) \in E} \operatorname{Pr}\Big[ \left\{\mbox{Eq. \ref{eqn:bound1} holds} \right\}\Big] \\
	&\leq& \Delta \overline{\Phi}\left(\frac{\frac{3}{2}-\alpha}{\sqrt{\frac{3}{4}+\alpha-{{\alpha}^2}}}\beta\right)
	\end{eqnarray*}
	where the last step follows from the fact that for any unit vector $u$, and a gaussian vector $r \sim N(0,1)^d$, the random variable $\langle u,r \rangle$ is distributed as a gaussian, and the definition of $\overline{\Phi}(\cdot)$. 
	
	Recall that $\Delta$ is the maximum degree of the graph. Observe that if we choose $\beta$ such that $\overline{\Phi}\left(\frac{\frac{3}{2}-\alpha}{\sqrt{\frac{3}{4}+\alpha-{{\alpha}^2}}}\beta\right)\leq\frac{1}{2 \Delta},$ then the probability that $i \notin S^{\prime}(\beta)$ given that $i \in S(\beta)$ is at most $\frac{1}{2}$. This would imply that the expected size of $S^{\prime}(\beta)$ is least half
the expected size of $S(\beta),$ which is $\frac{n}{2} \overline{\Phi}(\beta).$\newline
Now, set $\beta=\sqrt{2\ln{\Delta}}\left(\frac{\sqrt{\frac{3}{4}+\alpha-{{\alpha}^2}}}{\frac{3}{2}-\alpha}\right)$. Next we use the following fact about standard normal distributions.
\begin{fact}[Lemma 13.8 \cite{williamson2011design}]
	For  $x>0$, $\frac{x}{1+x^{2}} \phi(x) \leq \overline{\Phi}(x) \leq \frac{1}{x} \phi(x)$, where $\phi(x) := \frac{1}{\sqrt{2\pi}}e^{-x^2/2}$ is the pdf of the standard gaussian distribution. 
\end{fact}
From the above fact, for our choice of $\beta$, we get the following upper bound.

\begin{equation*}
	\overline{\Phi}\left(\frac{\frac{3}{2}-\alpha}{\sqrt{\frac{3}{4}+\alpha-{{\alpha}^2}}}\beta\right)\leq\frac{{\sqrt{\frac{3}{4}+\alpha-{{\alpha}^2}}}}{\beta\Big(\frac{3}{2}-\alpha\Big)} e^{\frac{-\frac12\left(\frac{3}{2}-\alpha\right)^2\beta^2}{\frac{3}{4}+\alpha-{{\alpha}^2}}} \leq \frac{1}{2 \Delta}\quad (1)
\end{equation*}

On the other hand, since $\beta \geq 1$, we have $\frac{\beta}{1 + \beta^2} \geq \frac{\beta}{2\beta^2} = \frac{1}{2\beta}$, and hence we can lower bound $\overline{\Phi}(\beta)$ as follows
\begin{equation*}		
	\overline{\Phi}(\beta)\geq \frac{1}{2 \beta} \frac{1}{\sqrt{2 \pi}} e^{-(\ln \Delta)\frac{\frac{3}{4}+\alpha-{{\alpha}^2}}{(\frac{3}{2}-\alpha)^2}  }=\Omega\left((\ln \Delta)^{-1 / 2} \Delta^{-\frac{\frac{3}{4}+\alpha-{{\alpha}^2}}{(\frac{3}{2}-\alpha)^2}  }\right)\quad(2)
\end{equation*}

Thus,
\begin{eqnarray*}
	E\left[ | S^{\prime}(\beta) |\right]=\sum_{i \in V} \operatorname{Pr}\Big[i \in S^{\prime}(\beta) \Big| i \in S(\beta)\Big] \operatorname{Pr}\Big[i \in S(\beta)\Big] \geq \frac{n}{2} \Omega\left((\ln \Delta)^{-1 / 2} \Delta^{-\frac{\frac{3}{4}+\alpha-{{\alpha}^2}}{(\frac{3}{2}-\alpha)^2}  }\right).
\end{eqnarray*}

which gives us the desired lower bound on the expected size of the independent set output by the randomized rounding procedure.
	\end{proof}
	
	 Now using the above lemma, we complete the proof of theorem. Consider algorithm ApproxHyperplaneColoring for rounding approximate $3$-vector colorings of graphs, which is analyzed in Claim \ref{cl:color}.
	 
	 \begin{algorithm}						
	 	\label{alg:}
	 	\SetAlgoLined
	 	\KwIn{$(3,\alpha)$-vector coloring $\{v_i\}_{i \in V}$ of graph $G = (V,E)$}
	 	Initialize $t \gets 1$,$G_1 \gets G$,$N =\frac{10}{\rho}\ln n$\;
	 	\While{$G_t \neq \phi$}{
	 		Let $I^1,I^2,\ldots,I^N$ be independent sets returned by $N$ i.i.d invocations of Lemma \ref{lem:3col-indset}\;  
	 		Let $I_t \gets \arg\max_{{I^i : i \in [N]}} |I^i|$, and set $G_{t+1} \gets G_t \setminus I_t$\;
	 		Update $t \gets t + 1$\;  
	 	}
	 	Output coloring $I_1 \uplus I_2 \uplus \cdots \uplus I_t$\;
	 	\caption{ApproxHyperplaneColoring}
	 \end{algorithm}
	 
	 \begin{claim}				\label{cl:color}
	 	For any iteration $t$, we have $|I_t| \geq \rho n|{\rm vert}(G_t)|/2$ with probability at least $1 - 1/n^3$.
	 \end{claim}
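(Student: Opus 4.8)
The plan is to bootstrap the claim from Lemma~\ref{lem:3col-indset} by the standard ``amplify a one-sided guarantee by independent repetition'' trick. Throughout, $\rho$ denotes the density from Lemma~\ref{lem:3col-indset} (the same quantity appearing in the definition of $N$ in ApproxHyperplaneColoring), i.e.\ the constant such that a single run of the hyperplane-rounding procedure on a $(3,\alpha)$-vector colorable graph on $m$ vertices with maximum degree at most $\Delta$ returns an independent set of \emph{expected} size at least $\rho\,m$; this is precisely what the proof of Lemma~\ref{lem:3col-indset} establishes, with the hidden constant folded into $\rho$. First I would observe that the graph $G_t$ handed to the $t$-th pass of the while loop is an induced subgraph of $G$, so restricting the vector coloring $\{v_i\}$ to $\mathrm{vert}(G_t)$ shows $G_t$ is again $(3,\alpha)$-vector colorable, and its maximum degree is at most $\Delta$; since the per-vertex yield $(\ln\Delta')^{-1/2}(\Delta')^{-(3/4+\alpha-\alpha^2)/(3/2-\alpha)^2}$ is non-increasing in the degree bound $\Delta'$, applying Lemma~\ref{lem:3col-indset} with the global bound $\Delta$ is legitimate. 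Hence each of $I^1,\dots,I^N$ produced in iteration $t$ satisfies $\E[|I^i|]\ge \rho\,|\mathrm{vert}(G_t)|$.

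The next step is a reverse Markov inequality converting this expectation bound into a constant-probability lower bound. Write $X=|I^i|$ and $m=|\mathrm{vert}(G_t)|$; then $0\le X\le m$ always, and splitting the expectation according to whether $X\ge \rho m/2$ gives $\rho m\le \E[X]\le m\Pr[X\ge \rho m/2]+\rho m/2$, hence $\Pr[\,|I^i|\ge \rho m/2\,]\ge \rho/2$. Because the $N$ invocations use independent random hyperplanes, the events $\{|I^i|<\rho m/2\}$ are independent, so
\[
\Pr\!\left[\,|I_t|<\tfrac{\rho}{2}\,|\mathrm{vert}(G_t)|\,\right]=\Pr\!\left[\,\forall i\in[N]:\ |I^i|<\tfrac{\rho}{2}m\,\right]\le\Bigl(1-\tfrac{\rho}{2}\Bigr)^{N}.
\]
Substituting $N=(10/\rho)\ln n$ and using $1-x\le e^{-x}$ yields $(1-\rho/2)^{N}\le e^{-5\ln n}=n^{-5}\le n^{-3}$, which is the claimed bound (with room to spare), since $|I_t|=\max_i |I^i|$.

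The only genuine subtlety, and the crux of the argument, is that Lemma~\ref{lem:3col-indset} is advertised as a ``with high probability'' statement whereas its proof only controls the \emph{mean} size of the returned independent set; the repetition in ApproxHyperplaneColoring (keeping the largest of $N$ i.i.d.\ samples) together with the reverse Markov inequality is exactly what bridges this gap, and one must keep $\rho$ pinned to a fixed expression so that $N=(10/\rho)\ln n$ is well defined. I would also remark in passing, for the subsequent color-count bound, that every $I_t$ is non-empty, so the while loop runs for at most $n$ iterations; a union bound over $t$ then shows that with probability at least $1-n^{-2}$ each iteration deletes at least a $\rho/2$ fraction of the surviving vertices, so the number of color classes is $O((1/\rho)\ln n)=\tilde O\bigl((\ln\Delta)^{1/2}\Delta^{(3/4+\alpha-\alpha^2)/(3/2-\alpha)^2}\bigr)$, which completes Theorem~\ref{thm:appx-col} --- though that step lies beyond the present claim.
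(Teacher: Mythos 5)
Your proof is correct and follows essentially the same route as the paper: a per-trial expectation bound from Lemma~\ref{lem:3col-indset}, converted to a per-trial success probability of at least $\rho/2$, then amplified over the $N=(10/\rho)\ln n$ independent invocations to get failure probability $(1-\rho/2)^N\le n^{-5}\le n^{-3}$. Your ``reverse Markov'' step is just Markov's inequality applied to the complement $\overline{I}$, which is exactly what the paper does, so the two arguments coincide.
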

	 \begin{proof}
	 	Consider one invocation of Lemma \ref{lem:3col-indset}. For any iteration $t$, let $n_t = \left|{\rm vert}(G_t)\right|$ denote the number of surviving vertices in the graph $G_t$.  Then we have have $\E\big[|I|\big] \geq \rho n_t$ or $\E\big[|\overline{I}|\big] \leq (1 - \rho) n_t$. Therefore, we have
	 \begin{equation*}
	 \Pr\Big[\forall i \in [N], |\overline{I}^i| \leq (1 - \rho/2)n_t \Big] = \left(\Pr\Big[|\overline{I}| \leq (1 - \rho/2)n_t \Big]\right)^N 
	 \overset{1}{\leq} \left(\frac{1 - \rho}{1 - \rho/2}\right)^N  
	 \overset{2}{\leq} (1 - \rho/2)^N \overset{3}{\leq} n^{-3} 
	 \end{equation*}
	 where step $1$ uses Markov's inequality, and step $2$ uses the fact $(1 - \rho/2)^2 \geq 1 - \rho$ and step $3$ follows from our choice of $N$.  
	 \end{proof}
	 Therefore, at iteration $t$, with probability at least $1 - n^{-3}$, the size of the independent set returned is at least $\rho |{\rm Vert}(G_t)|/2$, and therefore, the number of surviving vertices drops by a factor of $(1 - \rho/2)$. Hence, with probability at least $1 - 1/n^2$, in $t^* = {O}\left(\frac1\rho\ln n\right)$ iterations, every vertex will be accounted for i.e., they will be part of some independent set. Since each independent set forms a color class, the total number colors used is $O\left(\frac1\rho\ln n\right) = \tilde{O}\left((\ln \Delta)^{1 / 2} \Delta^{\frac{\frac{3}{4}+\alpha-{{\alpha}^2}}{(\frac{3}{2}-\alpha)^2}  }\right)$ colors.

\section{Partial $2$-coloring in the Adversarial Model}

In this section we prove Proposition \ref{prop:2col-gen}, which we recall here for convenience.

\TwoColGen*

The algorithm for the above proposition (described in Algorithm \ref{alg:2col-main})is basically the same as Algorithm 2, with the following key differences: (i) we solve the SDP for approximate vector $2$-coloring with slack constraints (instead of approximate vector $3$-coloring) and (ii) we can directly use Corollary \ref{corr:coloring} to round the vector solution without going through the degree reduction step. 

\begin{algorithm}[h!]					
	\SetAlgoLined
	Solve the Partial-$2$-Coloring SDP (SDP-P$2$C): 
	\begin{alignat*}{4}
		\mbox{minimize }  &  \sum_{i\in V} w_i  \\
		\mbox{subject to } &  \langle v_i, v_j \rangle \leq -{1} + 2z_{ij} &\qquad & \forall \{i,j\}\in E \\
		& z_{ij} \leq w_i+w_j &\qquad &  \forall \{i,j\}\in E \\
		& 0 \le z_{ij}\leq 1 &\qquad & \forall \{i,j\}\in E \\
		& 0 \le w_{i} \leq 1 &\qquad & \forall i\in V \\
		& \|v_i\|^2=1 &\qquad & \forall  i \in V
	\end{alignat*}
	{\it(i) Thresholding}: \;
	Let $S \gets \left\{i \in V| w_i \ge \gamma/4 \right\}$\;
	Let $G' \gets G[V \setminus S]$ be the graph obtained after deleting $S$\;
	\nonl{\it(ii) Round the approximate vector $2$-coloring}: \\
	Use the algorithm from Corollary \ref{corr:coloring} to color the remaining vertices in $G'$\;
	\caption{Partial-$2$-Coloring}    \label{alg:2col-main}
\end{algorithm}

We give a proof sketch of the correctness of the above algorithm.

\subsection{Proof of Proposition \ref{prop:2col-gen}}

To begin with, the following claim shows that in step (i) we do not throw away too many vertices.

\begin{claim}				\label{cl:2colstep(i)}
	Let $S \subset V$ be the set of vertices constructed in step (i) of the algorithm. Then $|S| \leq 4\epsilon n/\gamma$.
\end{claim}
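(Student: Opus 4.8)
The plan is to mirror the proof of Claim~\ref{cl:step(i)} almost verbatim, since the SDP-P$2$C relaxation is the analogue of SDP-P$3$C for $2$-coloring. First I would show that the optimal value of SDP-P$2$C is at most $\epsilon n$ for a $(1-\epsilon)$-\pkc{2} graph. To do this, fix any partition $V = V_{\rm good} \uplus V_{\rm bad}$ with $\Abs{V_{\rm bad}} \le \epsilon n$ and $G[V_{\rm good}]$ bipartite, and construct an explicit feasible solution: set $\widehat{w}_i = \mathbbm{1}(i \in V_{\rm bad})$ and $\widehat{z}_{ij} = \mathbbm{1}(\{i \in V_{\rm bad}\} \vee \{j \in V_{\rm bad}\})$, and let $\{\widehat{v}_i\}_{i \in V_{\rm good}}$ be a vector $2$-coloring of $G[V_{\rm good}]$ (i.e.\ two antipodal unit vectors determined by the bipartition), while $\widehat{v}_i$ for $i \in V_{\rm bad}$ is set to an arbitrary fixed unit vector. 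Then I would check feasibility: the box constraints on $\widehat{w}$ and $\widehat{z}$ and the unit-norm constraint are immediate; the constraint $\widehat{z}_{ij} \le \widehat{w}_i + \widehat{w}_j$ follows since the indicator of a disjunction is at most the sum of the indicators; and for the edge constraint $\langle \widehat{v}_i, \widehat{v}_j \rangle \le -1 + 2\widehat{z}_{ij}$, split into two cases — if both endpoints are good then $\langle \widehat{v}_i, \widehat{v}_j \rangle = -1 \le -1 + 2\widehat{z}_{ij}$ since the edge is properly $2$-colored, and if some endpoint is bad then $\widehat{z}_{ij} = 1$ so the right-hand side is $1 \ge \langle \widehat{v}_i, \widehat{v}_j \rangle$ by Cauchy--Schwarz.

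Having established feasibility, the SDP objective satisfies $\sum_{i \in V} w_i \le \sum_{i \in V} \widehat{w}_i = \Abs{V_{\rm bad}} \le \epsilon n$. Then I would apply Markov's inequality to the nonnegative variables $w_i$ returned by the SDP solver: since $S = \{i : w_i \ge \gamma/4\}$, we have
\[
\Abs{S} = n \cdot \Pr_{i \sim V}\left[w_i \ge \gamma/4\right] \le n \cdot \frac{4 \sum_{i \in V} w_i}{n\gamma} \le \frac{4\epsilon n}{\gamma},
\]
which is exactly the claimed bound.

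I do not anticipate any real obstacle here; the only point requiring a little care is the constant factor, namely that the threshold $\gamma/4$ (rather than $\gamma/3$ as in the $3$-coloring case) is what yields the $4\epsilon/\gamma$ bound rather than $3\epsilon/\gamma$, and this is purely bookkeeping in the Markov step. One should also note, for use in the next step of the algorithm, that every surviving edge $(i,j) \in E(G')$ has $z_{ij} \le w_i + w_j < \gamma/2$, so $\langle v_i, v_j \rangle \le -1 + 2z_{ij} < -1 + \gamma$, i.e.\ $G'$ is $(2,\gamma)$-vector colorable; but this observation belongs to the analysis of step (ii) rather than to the proof of Claim~\ref{cl:2colstep(i)} itself.
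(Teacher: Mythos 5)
Your proof is correct and follows essentially the same route the paper intends: the paper omits the proof of this claim precisely because it is, as you wrote it, a near-verbatim adaptation of the proof of Claim~\ref{cl:step(i)} (explicit indicator-based feasible solution with an antipodal vector $2$-coloring on $V_{\rm good}$, objective bounded by $\epsilon n$, then Markov at threshold $\gamma/4$). No gaps; the bookkeeping on the constant is exactly right.
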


The proof of the above claim is almost identical to that of Claim \ref{cl:step(i)}, hence we omit it here. Therefore after step $(i)$, the subgraph $G' = G[V\setminus S]$ satisfies the following properties:

\begin{enumerate}
	\item The graph $G'$ contains at least $(1 - 4\epsilon/\gamma)n$ vertices.
	\item The graph $G'$ is again $(2,\gamma)$-vector colorable since the vectors $(v_i)_{i \in V \setminus S}$ themselves give a $(2,\gamma)$-vector coloring of the graph.  
\end{enumerate} 

Now from Lemma \ref{lem:appx-2-col(b)} we know that $G'$ cannot contain odd cycles of length at most $1/8\sqrt{\gamma}$. Hence we can use Corollary \ref{corr:coloring} to color $G'$ using $\tilde{O}(n^{2\gamma})$ colors. This concludes the proof of Proposition \ref{prop:2col-gen}.

\section{Maximal and Maximum Short Odd Cycle sets}

\begin{proposition}		\label{prop:max-bound}
	For any graph $G:=(V,E)$, and parameter $\delta \in (0,1)$ the following holds. Let $\mathcal{C}$ be a maximal set of vertex disjoint odd cycles of length at most $1/\delta$, and let $\tilde{\mathcal{C}}$ be a set of largest cardinality of vertex disjoint odd cycles of length at most $1/\delta$. Then $|\mathcal{C}| \geq \delta|\tilde{\mathcal{C}}|$.
\end{proposition}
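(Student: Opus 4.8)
The plan is to argue by a simple charging argument: every cycle in the maximum family $\tilde{\mathcal{C}}$ must ``touch'' (share a vertex with) some cycle in the maximal family $\mathcal{C}$, and each cycle of $\mathcal{C}$ cannot be touched by too many cycles of $\tilde{\mathcal{C}}$ because the cycles of $\tilde{\mathcal{C}}$ are themselves vertex disjoint and each has few vertices. First I would observe that maximality of $\mathcal{C}$ means that there is no odd cycle of length at most $1/\delta$ that is vertex disjoint from all of $\mathcal{C}$; in particular, every $C \in \tilde{\mathcal{C}}$ intersects $\bigcup_{C' \in \mathcal{C}} {\rm vert}(C')$. So I can define a map $f : \tilde{\mathcal{C}} \to \mathcal{C}$ sending each $C \in \tilde{\mathcal{C}}$ to some $C' \in \mathcal{C}$ with ${\rm vert}(C) \cap {\rm vert}(C') \neq \emptyset$ (breaking ties arbitrarily).

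Next I would bound the fibers of $f$. Fix $C' \in \mathcal{C}$. Every $C \in f^{-1}(C')$ contains at least one vertex of ${\rm vert}(C')$, and since the cycles in $\tilde{\mathcal{C}}$ are pairwise vertex disjoint, distinct elements of $f^{-1}(C')$ use distinct vertices of ${\rm vert}(C')$. Hence $\lvert f^{-1}(C') \rvert \le \lvert {\rm vert}(C') \rvert \le 1/\delta$, using that $C'$ has length at most $1/\delta$. Summing over $C' \in \mathcal{C}$ gives $\lvert \tilde{\mathcal{C}} \rvert = \sum_{C' \in \mathcal{C}} \lvert f^{-1}(C') \rvert \le (1/\delta)\lvert \mathcal{C} \rvert$, which rearranges to $\lvert \mathcal{C} \rvert \ge \delta \lvert \tilde{\mathcal{C}} \rvert$, as claimed.

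There is essentially no hard step here; the only point that needs a little care is the observation that an odd cycle of length at most $1/\delta$ consists of at most $1/\delta$ vertices (since a cycle has as many vertices as edges), so that the fiber bound $\lvert f^{-1}(C') \rvert \le 1/\delta$ is legitimate, and the fact that maximality of $\mathcal{C}$ (not maximum cardinality, just inclusion-maximality) already forces every short odd cycle, and in particular every member of $\tilde{\mathcal{C}}$, to meet $\bigcup_{C' \in \mathcal{C}}{\rm vert}(C')$. Once these two observations are in place the counting is immediate.
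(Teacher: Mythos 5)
Your proof is correct and follows essentially the same charging argument as the paper: every cycle of $\tilde{\mathcal{C}}$ meets some cycle of $\mathcal{C}$ by maximality, and each cycle of $\mathcal{C}$, having at most $1/\delta$ vertices, can meet at most $1/\delta$ of the pairwise vertex-disjoint cycles of $\tilde{\mathcal{C}}$. Your write-up is in fact slightly more careful than the paper's in pinpointing that it is the disjointness of $\tilde{\mathcal{C}}$ (not of $\mathcal{C}$) that bounds each fiber.
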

\begin{proof}
	Since $\mathcal{C}$ is a maximal set of vertex disjoint odd cycles of length at most $1/\delta$, for every odd cycle $\tilde{C} \in \tilde{\mathcal{C}}$, there exists an odd cycle $C \in \mathcal{C}$ such that $C \cap \tilde{C} \neq \emptyset$ i.e,. $\tilde{C}$ is hit by $C$. Now  we observe that (i) the cycles in $\mathcal{C}$ are vertex disjoint and (ii) each cycle $C \in \mathcal{C}$ has size at most $1/\delta$, it follows that any cycle $C \in \mathcal{C}$ hits at most $1/\delta$ cycles in $\tilde{\mathcal{C}}$. Since every cycle in $\tilde{\mathcal{C}}$ is hit by some cycle in $\mathcal{C}$, we must have $|\mathcal{C}| \geq \frac{|\tilde{\mathcal{C}}|}{1/\delta} = \delta|\tilde{\mathcal{C}}|$.
\end{proof}

\section{Identifying the set of Good Vertices is NP-hard}

\begin{fact}				\label{fact1}
For all $k \in \mathbbm{N}$, given a graph $\alpha$-partially $k$-colorable graph $G = (V,E)$ it is NP-Hard to identify a set $V_{\rm good} \subset V$ of size at least $\alpha n$ such that $G[V_{\rm good}]$ is $k$-colorable 
\end{fact}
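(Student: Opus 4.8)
The plan is to prove Fact~\ref{fact1} by reducing suitable NP-hard problems to the stated search task, treating a hypothetical algorithm for it as an oracle whose output we then verify. Suppose, for contradiction, that there is a polynomial-time algorithm $\mathcal{B}$ that, given a graph $G$ and a rational $\alpha$ under the promise that $G$ is $\alpha$-\pkc{k}, returns a set $V_{\rm good}\subseteq V(G)$ with $\Abs{V_{\rm good}}\ge\alpha\Abs{V(G)}$ such that $G[V_{\rm good}]$ is $k$-colorable. In every reduction below we will hand $\mathcal{B}$ an instance that satisfies the promise exactly when the source instance is a ``yes'' instance; since $\mathcal{B}$ is only guaranteed to behave on promise-satisfying inputs, the crux is that the two properties ``$\Abs{V_{\rm good}}$ is large enough'' and ``$G[V_{\rm good}]$ is $k$-colorable'' must both be efficiently testable on $\mathcal{B}$'s output, so that we can read off the answer.

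First I would handle $k\in\{1,2\}$, where colorability of a subgraph is polynomial-time decidable. For $k=1$ I reduce from maximum independent set and for $k=2$ from maximum induced bipartite subgraph (equivalently the complement of odd cycle transversal, NP-hard by Yannakakis~\cite{Yann78}): given a source instance $(H,t)$ with $m=\Abs{V(H)}$, put $G:=H$ and $\alpha:=t/m$. If $H$ has an induced $k$-colorable subgraph of size at least $t$ then $G$ is $\alpha$-\pkc{k}, so $\mathcal{B}(G,\alpha)$ must output some $S$ with $\Abs{S}\ge t$ and $G[S]$ $k$-colorable; otherwise no such $S$ exists, so whatever $\mathcal{B}$ outputs fails one of the two tests. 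Both tests run in polynomial time, hence running $\mathcal{B}$ and checking its output decides the source problem -- a contradiction.

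For $k\ge3$ the obstruction is that verifying ``$G[S]$ is $k$-colorable'' is itself NP-hard, so $\mathcal{B}$'s output cannot be certified directly. To circumvent this I would feed $\mathcal{B}$ not a bare $k$-coloring instance $H$ on $m$ vertices but a \emph{color-selection gadget} $\Gamma(H)$ on $O(mk)$ vertices -- with a node $(i,c)$ for each vertex $i$ of $H$ and each color $c\in[k]$ -- engineered so that (a) $\Gamma(H)$ has an induced $k$-colorable subgraph of a prescribed size $N$ if and only if $H$ is $k$-colorable, and (b) every induced $k$-colorable subgraph of $\Gamma(H)$ of size at least $N$ contains exactly one node $(i,c)$ for each $i$ and hence literally spells out a proper $k$-coloring of $H$. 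Taking $\alpha:=N/\Abs{V(\Gamma(H))}$, the promise ``$\Gamma(H)$ is $\alpha$-\pkc{k}'' holds precisely when $H$ is $k$-colorable; when it holds $\mathcal{B}$ must return a set that encodes a $k$-coloring of $H$ -- polynomial-time checkable -- and when it fails no size-$N$ induced subgraph is $k$-colorable, so $\mathcal{B}$'s output cannot pass the check. Thus verifying $\mathcal{B}$'s output decides $k$-coloring.

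The main obstacle I anticipate is item (b): designing the clique and consistency edges of $\Gamma(H)$ together with the threshold $N$ so that a $k$-colorable induced subgraph of size at least $N$ is forced to be a one-color-per-vertex selection, rather than, say, keeping several nodes from some gadget cliques at the expense of others -- this is exactly what turns $\mathcal{B}$'s output into a self-certifying object and sidesteps the NP-hardness of $k$-colorability testing. As a fallback, I would record the weaker reading of Fact~\ref{fact1} in which $\alpha$ equals the optimal partial-$k$-colorability fraction and $V_{\rm good}$ must be a maximum such set: there the statement is immediate, since such an algorithm would decide whether this optimum equals $\Abs{V}$, i.e.\ solve $k$-coloring (NP-hard for $k\ge3$), while for $k\le2$ computing a maximum independent set / maximum induced bipartite subgraph is classically NP-hard.
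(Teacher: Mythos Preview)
The paper's own proof is a single line: it takes $\alpha = 1 - 1/(2n)$, so that $\Abs{V_{\rm good}}\ge \alpha n$ forces $V_{\rm good}=V$, and hence the task ``find such a $V_{\rm good}$'' collapses to deciding whether $G$ itself is $k$-colorable, which is NP-hard by~\cite{Karp72}. No verification or gadgetry is invoked.

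Your proposal takes a genuinely different and considerably more careful route. Two differences are worth noting. First, you treat the problem explicitly as a \emph{promise} search problem and worry about whether $\mathcal{B}$'s output can be certified in polynomial time; the paper does not engage with this and implicitly reads the task in the non-promise sense (``find $V_{\rm good}$ or report that none exists''), under which the $\alpha=1-1/(2n)$ argument is immediate for $k\ge 3$. Second, your reductions from Independent Set and Odd Cycle Transversal actually cover $k\in\{1,2\}$, whereas the paper's one-liner does not (since $1$- and $2$-colorability are in P, the cited reduction to $k$-coloring says nothing there); so on the ``for all $k$'' front your argument is strictly more complete. What the paper's approach buys is brevity for the cases it does cover; what yours buys is a sound treatment of the small-$k$ cases and of the promise formulation. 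The one soft spot in your proposal is the $k\ge 3$ gadget $\Gamma(H)$: you correctly flag item~(b) as the crux and leave it unresolved, so as written that part is a plan rather than a proof. If you want to finish it, an easy route is to bypass the self-certification issue entirely by reducing instead from the search problem ``given a $k$-colorable graph, output a proper $k$-coloring'' (NP-hard for $k\ge 3$ by the standard run-and-check argument), or simply adopt the paper's non-promise reading for $k\ge 3$ and keep your clean reductions for $k\le 2$.
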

\begin{proof}
		For $\alpha = 1 - 1/2n$, this is exactly the $k$-Coloring problem which is NP-Hard \cite{Karp72}.
\end{proof}

\end{document}